\definecolor{ForestGreen}{rgb}{0.1333,0.5451,0.1333}
\newcommand{\myparagraph}[1]{\smallskip\noindent {\bf #1.}}
\newcommand{\julian}[1]{{\bf Julian: #1}}
\newcommand{\yan}[1]{{\bf Yan: #1}}
\newcommand{\wcost}{\omega}
\newcommand{\memsize}{M}
\newcommand{\ourmodel}{$(\memsize,\wcost)$-ARAM}
\newcommand{\ourmodelfull}{$(\memsize,\wcost)$-Asymmetric RAM (ARAM)}
\newcommand{\fastmem}{small-memory}
\newcommand{\slowmem}{large-memory}
\newcommand{\work}{time}
\newcommand{\iocost}{ARAM cost}
\newcommand{\W}{T}
\newcommand{\boruvka}{Bor\r{u}vka}
\let \originalleft \left
\let\originalright\right
\renewcommand{\left}{\mathopen{}\mathclose\bgroup\originalleft}
\renewcommand{\right}{\aftergroup\egroup\originalright}
\newcommand{\mb}[1]{{\mbox{\emph{#1}}}}
\newcommand{\smb}[1]{{\mbox{\scriptsize\emph{#1}}}}
\newcommand{\SP}{\mbox{ShortestPath}}
\newtheorem{theorem}{Theorem}
\newtheorem{corollary}{Corollary}
\newtheorem{lemma}{Lemma}
\newcommand{\hide}[1]{}
\title{Efficient Algorithms with Asymmetric Read and Write Costs}
\author{Guy
  E. Blelloch\\Carnegie Mellon University\\guyb@cs.cmu.edu \and
  Jeremy T. Fineman\\Georgetown
    University\\jfineman@cs.georgetown.edu \and Phillip
  B. Gibbons\\Carnegie Mellon University\\
    gibbons@cs.cmu.edu  \and Yan Gu\\Carnegie Mellon
    University\\ yan.gu@cs.cmu.edu \and Julian Shun\\UC Berkeley\\ jshun@eecs.berkeley.edu}
\begin{document}
    \maketitle
    \thispagestyle{empty}
    \par
\setcounter{secnumdepth}{2}
\setcounter{tocdepth}{2}

\begin{abstract}
In several emerging technologies for computer memory (main memory),
the cost of reading is significantly cheaper than the cost of writing.
Such asymmetry in memory costs poses a fundamentally different model
from the RAM for algorithm design.  In this paper we study lower and
upper bounds for various problems under such asymmetric read and write
costs.  We consider both the case in which all but $O(1)$ memory has
asymmetric cost, and the case of a small cache of symmetric memory.
We model both cases using the \ourmodel, in which there is a small
(symmetric) memory of size $\memsize$ and a large unbounded
(asymmetric) memory, both random access, and where reading from the
large memory has unit cost, but writing has cost $\wcost \gg 1$.

For FFT and sorting networks we show a lower bound cost of
$\Omega(\wcost n \log_{\wcost \memsize} n)$, which indicates that it
is not possible to achieve asymptotic improvements with cheaper reads
when $\wcost$ is bounded by a polynomial in $\memsize$.  Moreover,
there is an asymptotic gap (of $\min(\wcost,\log
n)/\log(\wcost \memsize)$) between the cost of sorting networks and
comparison sorting in the model.  This contrasts with the RAM,
and most other models, in which the asymptotic costs are the same.  We
also show a lower bound for computations on an $n \times n$ diamond
DAG of $\Omega(\wcost n^2/M)$ cost, which indicates no asymptotic
improvement is achievable with fast reads.  However, we show that for
the minimum edit distance problem (and related problems), which would
seem to be a diamond DAG, we can beat this lower bound with an
algorithm with only $O(\wcost n^2/ (M
\min(\wcost^{1/3},M^{1/2})))$ cost.  To achieve this we make use of a
``path sketch'' technique that is forbidden in a strict DAG
computation.  Finally, we show several interesting upper bounds for
shortest path problems, minimum spanning trees, and other problems.
A common theme in many of the upper bounds is that they require
redundant computation and a tradeoff between reads and writes.

\end{abstract}
\newpage
\setcounter{page}{1}
\section{Introduction}

Fifty years of algorithms research has focused on settings in which
reads and writes (to memory) have similar cost.  But what if reads and writes
to memory have significantly \emph{different} costs?  How would that impact
algorithm design?  What new techniques are useful for trading-off doing more
cheaper operations (say more reads) in order to do fewer expensive operations (say
fewer writes)?  What are the fundamental limitations on such trade-offs (lower bounds)?
What well-known equivalences for traditional memory fail to hold for
asymmetric memories?

Such questions are coming to the fore with the arrival of new
\textbf{main-memory} technologies~\cite{hp-nvm15,intel-nvm15} that
offer key potential benefits over existing technologies such as DRAM,
including nonvolatility, signicantly lower energy consumption, and
higher density (more bits stored per unit area).  These emerging
memories will sit on the processor's memory bus and be accessed at
byte granularity via loads and stores (like DRAM), and are projected
to become the dominant main memory within the
decade~\cite{Meena14,Yole13}.\footnote{While the exact technology is
  continually evolving, candidate technologies include phase-change
  memory, spin-torque transfer magnetic RAM, and memristor-based
  resistive RAM.}  Because these emerging technologies store data as
``states'' of a given material, the cost of reading (checking the
current state) is significantly cheaper than the cost of writing
(modifying the physical state of the material): Reads are up to an
order of magnitude or more lower energy, lower latency, and higher
(per-module) bandwidth than writes~\cite{Akel11,Athanassoulis12, BFGGS15,Carsonetal15,Dong09,Dong08,HuZXTGS14,ibm-pcm14b,Kim14,Qureshi12,Xu11}.

This paper provides a first step towards answering these fundamental
questions about asymmetric memories.  We introduce a simple model for studying such
memories, and a number of new results.  In the simplest
model we consider, there is an asymmetric random-access memory such
that reads cost 1 and writes cost $\wcost \gg 1$, as well as a constant
number of symmetric ``registers'' that can be read or written at unit
cost.  More generally, we consider settings in which the amount of
symmetric memory is $\memsize \ll n$, where $n$ is the input size:
We define the \emph{\ourmodelfull}, comprised of a symmetric
\fastmem{} of size $\memsize$ and an asymmetric \slowmem{} of unbounded
size with write cost $\wcost$.  The \emph{\iocost{} $Q$} is
the number of reads from \slowmem{} plus $\wcost$ times the number of
writes to \slowmem. The \emph{\work{} $\W$} is $Q$ plus the number of
reads and writes to \fastmem.

\newcommand{\mytag}{^\dag}

\begin{table}[t!]
\def\arraystretch{1.3}
\begin{center}
\caption{\label{tbl:results}
Summary of Our Results for the \ourmodel{} ($\mytag$indicates main results)}
\begin{tabular}{@{ }p{4.7cm}>{\centering}p{4.9cm}@{ }>{\centering}p{4.6cm}@{ }c@{ }}\toprule
\multirow{2}{*}{{\bf problem}} & {\bf \iocost{}} & {\bf \work{}} &
\multirow{2}{*}{{\bf section}} \\
& {\bf $Q(n)$ or $Q(n,m)$} & {\bf $\W(n)$ or $\W(n,m)$} & \\
\bottomrule
FFT & $\Theta(\wcost n \log n / \log(\wcost \memsize))\mytag$
  & $\Theta(Q(n) + n \log n)$ & \ref{sec:lowerbound}, \ref{sec:upper-intro} \\
\midrule
sorting networks & $\Omega(\wcost n \log n / \log(\wcost \memsize))\mytag$
  & $\Omega(Q(n) + n \log n)$ & \ref{sec:lowerbound} \\
sorting (comparison) & $O(n(\log n + \wcost))$
  & $\Theta(n(\log n + \wcost))$ & \ref{sec:upper-intro}, \cite{BFGGS15} \\
\hline
diamond DAG & $\Theta(n^2 \wcost/ \memsize)\mytag$
  & $\Theta(Q(n)+ n^2)$ & \ref{sec:lowerbound} \\
longest common subsequence, edit distance & \multirow{2}{*}{$O(n^2 \wcost/ \min(\wcost^{1/3}\memsize, \memsize^{3/2}))\mytag$}
  & $O(n^2 (1+\wcost / \min(\wcost^{1/3}\memsize^{2/3}, \memsize^{3/2})))\mytag$ & \multirow{2}{*}{\ref{sec:lcs}} \\
\hline
search tree, priority queue & $O(\wcost + \log n)$ per update
  & $O(\wcost + \log n)$ per update & \ref{sec:upper-intro} \\
\hline
2D convex hull, triangulation & $O(n(\log n + \wcost))$
  & $\Theta(n(\log n + \wcost))$ & \ref{sec:upper-intro} \\
\hline
BFS, DFS, topological sort, & \multirow{2}{*}{$\Theta(\wcost n + m)$}
  & \multirow{2}{*}{$\Theta(\wcost n + m)$} &
\multirow{2}{*}{\ref{sec:upper-intro}} \\
biconnected components, SCC & & & \\
\hline
\multirow{2}{*}{single-source shortest path} & $O(\min(n(\wcost+ m/\memsize), (m+n\log n)\wcost, m(\wcost+ \log n)))\mytag$
  & \multirow{2}{*}{$O(Q(n,m) + n \log n)$} & \multirow{2}{*}{\ref{sec:dijkstra}}\\
all-pairs shortest-path & $O(n^2(\wcost + n/\sqrt{\memsize}))$
  & $O(Q(n)+n^3)$ & \ref{sec:upper-intro} \\
\hline
minimum spanning tree & $O(m \min(\log n, n/\memsize) + \wcost n)\mytag$
  & $O(Q(n,m)+n\log n)$ & \ref{sec:MST} \\
\bottomrule
\end{tabular}
\end{center}
\vspace{-0.3in}
\end{table}

We present a number of lower and upper bounds for the \ourmodel, as
summarized in Table~\ref{tbl:results}.  These results consider a
number of fundamental problems and demonstrate how the asymptotic
algorithm costs decrease as a function of $M$, e.g., polynomially,
logarithmically, or not at all.

For FFT we show an $\Omega(\wcost n \log_{\wcost \memsize} n)$ lower
bound on the \iocost{}, and a matching upper bound.  Thus, even allowing
for redundant (re)computation of nodes (to save writes), it is not
possible to achieve asymptotic improvements with cheaper reads when
$\wcost \in O(\memsize^c)$ for a constant $c$.  Prior lower bound approaches for FFTs for
symmetric memory fail to carry over to asymmetric memory, so a new
lower bound technique is required.  We use an interesting new accounting
argument for fractionally assigning a unit weight for each node of the
network to subcomputations that each have cost $\wcost \memsize$.
The assignment shows that each subcomputation has on average at most
$\memsize \log (\wcost \memsize)$ weight assigned to it,
and hence the total cost across all $\Theta(n \log n)$ nodes yields the
lower bound.

For sorting, we show the surprising result that on asymmetric memories,
comparison sorting is asymptotically faster than sorting networks.
This contrasts with the RAM model (and I/O models, parallel models such as
the PRAM, etc.), in which the asymptotic costs are the same!
The lower bound leverages the same key partitioning lemma as in the FFT
proof.

We present a tight lower bound for DAG computation on
diamond DAGs that shows there is no asymptotic advantage of cheaper
reads.  On the other hand, we also show that allowing a vertex to be
``partially'' computed before all its immediate predecessors have been
computed (thereby violating a DAG computation rule), we can
beat the lower bound and show asymptotic advantage.
Specifically, for both the longest common subsequence and edit
distance problems (normally thought of as diamond DAG computations),
we devise a new ``path sketch'' technique that leverages partial
aggregation on the DAG vertices.  Again we know of no other models
in which such techniques are needed.

Finally, we show how to adapt Dijkstra's single-source shortest-paths
algorithm using phases so that the priority queue is kept in
\fastmem{}, and briefly sketch how to adapt \boruvka's minimum
spanning tree algorithm to reduce the number of shortcuts and hence
writes that are needed.  A common theme in many of our algorithms is
that they use redundant computations and require a tradeoff between
reads and writes.

\subparagraph*{Related Work} Prior work~\cite{BT06,Eppstein14,Gal05,
  nath:vldbj10,ParkS09,Viglas14} has studied read-write asymmetries in
NAND flash memory, but this work has focused on (i) the asymmetric
\emph{granularity} of reads and writes in NAND flash chips: bits can
only be cleared by incurring the overhead of erasing a large block of
memory, and/or (ii) the asymmetric \emph{endurance} of reads and
writes: individual cells wear out after tens of thousands of writes to
the cell.  Emerging memories, in contrast, can read and write
arbitrary bytes in-place and have many orders of magnitude higher
write endurance, enabling system software to readily balance
application writes across individual physical cells by adjusting its
virtual-to-physical mapping.  Other prior work has studied database
query processing under asymmetric read-write
costs~\cite{Chen11,Chen15,Viglas12,Viglas14} or looked at other systems
considerations~\cite{ChoL09,HuZXTGS14,
  LeeIMB09,yang:iscas07,ZhouZYZ09,ZWT13}.  Our recent
paper~\cite{BFGGS15} introduced the general study of parallel (and
external memory) models and algorithms with asymmetric read and write
costs, focusing on sorting.  Our follow-on paper~\cite{BBFGGMS16}
defined an abstract nested-parallel model of computation with
asymmetric read-write costs that maps efficiently onto more concrete
parallel machine models using a work-stealing scheduler, and presented
reduced-write, work-efficient, highly-parallel algorithms for a number
of fundamental problems such as tree contraction and convex hull.  In
contrast, this paper considers a much simpler model (the sequential
\ourmodel) and presents not just algorithms but also lower bounds---plus,
the techniques are new.  Finally, concurrent with this
paper, Carson et al.~\cite{Carsonetal15} developed interesting upper
and lower bounds for various linear algebra problems and direct N-body
methods under asymmetric read and write costs.  For sequential
algorithms, they define a model similar to the \ourmodel{} (as well as a
cache-oblivious variant), and show that for ``bounded data reuse''
algorithms, i.e., algorithms in which each input or computed value is
used only a constant number of times, the number of writes to
asymmetric memory is asymptotically the same as the sum of the reads
and writes to asymmetric memory.  This implies, for example, a tight
$\Omega(n \log n/\log M)$ lower bound on the number of writes for FFT
under the bounded data reuse restriction; in contrast, our tight
bounds for FFT do not have this restriction and use fewer writes.
They also presented algorithms without this restriction for matrix
multiplication, triangular solve, and Cholesky factorization that
reduce the number of writes to $\Theta(\mbox{output size})$,
without increasing the
number of reads, as well as various distributed-memory parallel algorithms.

\section{Model and Preliminaries}
\label{sec:prelim}

We analyze algorithms in an \emph{\ourmodel}.  In the model we assume
a symmetric \emph{\fastmem} of size $\memsize \geq 1$, an asymmetric
\emph{\slowmem} of unbounded size, and a \emph{write cost}
$\wcost \geq 1$, which we assume without loss of generality is an integer.
(Typically, we are interested in the setting where $n \gg \memsize$,
where $n$ is the input size, and $\wcost \gg 1$.)
We assume standard random access machine (RAM) instructions.  We
consider two cost measures for computations in the model.  We define
the (asymmetric) \emph{\iocost} $Q$ as the total number of reads from
\slowmem{} plus $\wcost$ times the number of writes to \slowmem.  We define the
(asymmetric) \emph{\work} $\W$ as the \iocost{} plus the number of reads
from and writes to \fastmem.\footnote{The \work{} metric models the fact
  that reads to certain emerging asymmetric memories
  are projected to be roughly as fast as reads to symmetric memory
  (DRAM). The \iocost{} metric $Q$ does not make this assumption and hence is more generally applicable.}  Because
all instructions are from memory, this includes any cost of computation.  In
the paper we present results for both cost measures.

The model contrasts with the widely-studied external-memory
model~\cite{AggarwalV88} in the asymmetry of the read and write costs.
Also for simplicity in this paper we do not partition the memory into
blocks of size $B$.  Another difference is that the asymmetry implies
that even the case of $\memsize=O(1)$ (studied in~\cite{BFGGS15} for
sorting) is interesting.  We note that our \iocost{}
is a special case of the general flash model cost proposed
in~\cite{Ajwani2009}; however that paper presents algorithms only for
another special case of the model with symmetric read-write costs.


We use the term \emph{value} to refer to an object that fits in one
word (location) of the memory.  We assume words are of size
$\Theta(\log n)$ for input size $n$.  The size $\memsize$ is the
number of words in \fastmem.  All logarithms are base 2 unless
otherwise noted.
The DAG computation problem is given a DAG and a value for each of its
input vertices (in-degree = 0), compute the value for each of its
output vertices (out-degree = 0).  The value of any non-input vertex
can be computed in unit time given the value of all its immediate
predecessors.  As in standard I/O models~\cite{AggarwalV88} we assume
values are atomic and cannot be split when mapped into the memory.
The DAG computation problem can be modeled as a pebbling game on the
DAG~\cite{HPV77}.  Note that we allow (unbounded) recomputation of a
DAG vertex, and indeed recomputation is a useful technique for
reducing the number of writes (at the cost of additional reads).

\newcommand{\icount}{l}
\newcommand{\ocount}{m}

\section{Lower Bounds}
\label{sec:lowerbound}

We start by showing lower bounds for FFT DAGs, sorting networks and
diamond DAGs.  The idea in showing the lower bounds is to
partition a computation into subcomputations that each have a lower bound on
cost, but an upper bound on the number of inputs and outputs
they can use.  Our lower bound for FFT DAGs then uses an interesting
accounting technique that gives every node in the DAG a unit weight,
and fractionally assigns this weight across the subcomputations.  In
the special case $\wcost = 1$, this leads to a simpler proof for
the lower bound on the I/O complexity of FFT DAGs than the well-known
bound by Hong and Kung~\cite{HK81}.

We refer to a \emph{subcomputation} as any contiguous sequence of
instructions.  The \emph{outputs} of a subcomputation are the values
written by the subcomputation that are either an output of the full
computation or read by a later subcomputation.  Symmetrically, the
\emph{inputs} of a subcomputation are the values read by the
subcomputation that are either an input of the full computation or
written by a previous subcomputation.  The \emph{space} of a
computation or subcomputation is the number of memory locations both
read and written.  An $(\icount,\ocount)$-\emph{partitioning} of a
computation is a partitioning of instructions into subcomputations
such that each has at most $\icount$ inputs and at most $\ocount$
outputs.
We allow for recomputation---instructions in different subcomputations
might compute the same value.

\begin{lemma}
\label{lem:partitions}
Any computation in the \ourmodel{} 
has an $((\wcost + 1)\memsize,2\memsize)$-partitioning such that at
most one of the subcomputations has \iocost{} $Q < \wcost \memsize$.
\end{lemma}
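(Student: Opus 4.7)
\medskip

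\noindent\textbf{Proof proposal.} The plan is to build the partition greedily from left to right, cutting as soon as either the number of slow-memory reads or the number of slow-memory writes in the current segment hits a specified budget. Start a new subcomputation at the first unassigned instruction. Track two counters as we add instructions one at a time: the number $r$ of reads from \slowmem{} and the number $w$ of writes to \slowmem{} performed so far by the current subcomputation. Close the subcomputation at the first instruction whose execution would raise $r$ to $\wcost\memsize$ or $w$ to $\memsize$, and begin the next subcomputation with that instruction. The very last subcomputation is whatever is left over after the final cut.

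The bulk of the argument is then three simple checks on one of these subcomputations. First, for the I/O-cost condition: every subcomputation except possibly the final leftover is terminated because $r$ reached $\wcost\memsize$ or $w$ reached $\memsize$, so its \iocost{} is at least $\wcost\memsize$ in either case (the reads contribute $\wcost\memsize$, or the writes contribute $\wcost\cdot\memsize$). Only the last subcomputation can escape this because it stops at the end of the computation rather than by hitting a threshold, so at most one subcomputation has $Q<\wcost\memsize$. Second, for the input bound: every input either sits in \fastmem{} at the start of the subcomputation (at most $\memsize$ such values) or is brought in by a read from \slowmem{} during the subcomputation (at most $r\le \wcost\memsize$ such values, since we stopped the instant $r$ would exceed $\wcost\memsize$). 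Summing gives at most $(\wcost+1)\memsize$ inputs. Third, for the output bound: every output is either a value written by this subcomputation to \slowmem{} (at most $w\le\memsize$) or a value written during the subcomputation that remains resident in \fastmem{} at its end (at most $\memsize$), giving at most $2\memsize$ outputs.

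The one subtlety worth pinning down, and the only place the argument is not totally mechanical, is the cut rule: one must stop \emph{before} executing the offending instruction, not after, so that $r\le\wcost\memsize$ and $w\le\memsize$ strictly hold when the subcomputation closes; putting the threshold-crossing instruction into the next subcomputation is what preserves the input/output budgets while still forcing the closed subcomputation's cost to be $\ge\wcost\memsize$. A minor second point is that a single RAM instruction touches a bounded number of memory words, so $r$ and $w$ each grow by at most a constant per instruction and the greedy cut is well-defined. Everything else is bookkeeping: the definitions of inputs and outputs refer only to values that are, respectively, read from or written to memory by the subcomputation, so the two-source accounting (``\fastmem{} carry-over'' plus ``\slowmem{} traffic'') exhausts all possibilities, and the bounds follow.
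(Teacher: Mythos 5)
Your overall strategy---a greedy left-to-right partition followed by the three bookkeeping checks (cost lower bound; input bound via ``\fastmem{} carry-over plus \slowmem{} reads''; output bound via ``\slowmem{} writes plus \fastmem{} residue'')---is exactly the paper's, and the input and output checks are correct as you state them. You also cut on the two counters $r$ and $w$ separately rather than on the combined cost $Q = r + \wcost w$ as the paper does; that difference is immaterial. The genuine problem is the cut placement, which is precisely the point you single out as ``the one subtlety worth pinning down,'' and you resolve it the wrong way. If a block is closed \emph{before} the instruction that would raise $w$ to $\memsize$, then at the moment of closing $w \le \memsize - 1$ while $r$ may be $0$, so that block's \iocost{} can be as low as $\wcost(\memsize-1) < \wcost\memsize$; symmetrically, closing before the read threshold gives cost as low as $\wcost\memsize - 1$. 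So under your rule \emph{every} block, not just the final leftover, may have $Q < \wcost\memsize$, and the lemma's conclusion is not established. Your assertion that excluding the offending instruction ``still forc[es] the closed subcomputation's cost to be $\ge \wcost\memsize$'' is false.

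The repair is to do the opposite: include the threshold-crossing instruction in the current block, which is what the paper does (each block has $Q \ge \wcost\memsize$, but removing its last instruction would drop the cost below $\wcost\memsize$). This costs nothing on the input/output side, because a single instruction performs at most one read and one write to \slowmem: the closed block then has at most $\wcost\memsize$ reads from and at most $\memsize$ writes to \slowmem{} (the budgets in the lemma are $\le$, not $<$), so the $(\wcost+1)\memsize$ input bound and $2\memsize$ output bound survive unchanged. With that one-line correction your argument coincides with the paper's proof.
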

\begin{proof}
  We generate the partitioning constructively.  Starting at the
  beginning, partition the instructions into contiguous blocks such that
  all but possibly the last block has cost $Q \geq \wcost \memsize$,
  but removing the last instruction from the block would have cost $Q <
  \wcost \memsize$.  To remain within the cost bound each such
  subcomputation can read at most $\wcost \memsize$ values from
  \slowmem.  It can also read the at most $\memsize$ values that are
  in the \fastmem{} when the subcomputation starts.  Therefore it can
  read at most $(\wcost + 1)\memsize$ distinct values from the input
  or from previous subcomputations.  Similarly, each subcomputation
  can write at most $\memsize$ values to \slowmem, and an additional
  $\memsize$ that remain in \fastmem{} when the subcomputation ends.
  Therefore it can write at most $2\memsize$ distinct values that are
  available to later subcomputations or the output.
\end{proof}

\myparagraph{FFT} We now consider lower bounds for the DAG computation
problem for the family of FFT DAGs (also called FFT networks, or
butterfly networks).  The FFT DAG of input size $n = 2^k$ consists of
$k + 1$ levels each with $n$ vertices (total of $n \log 2 n$
vertices).  Each vertex $(i,j)$ at level $i \in {0,\ldots,k-1}$ and
row $j$ has two out edges, which go to vertices $(i+1,j)$ and $(i+1,j
\oplus 2^i)$ ($\oplus$ is the exclusive-or of the bit representation).
This is the DAG used by the standard FFT (Fast Fourier Transform)
computation.  We note that in the FFT DAG there is at most a
single path from any vertex to another.

\begin{lemma}
\label{lem:fftpartitions}
Any $(\icount,\ocount)$-partitioning of a computation for simulating
an $n$ input FFT DAG has at least $n \log n / (\ocount \log \icount)$
subcomputations.
\end{lemma}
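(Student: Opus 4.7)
The plan is to show that each subcomputation in the partition receives at most $O(m \log \ell)$ units of ``weight'' under a fractional assignment whose total across all subcomputations is $n \log n$, which then forces $N \cdot m \log \ell \gtrsim n \log n$.

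I would begin by setting up the accounting. The FFT DAG on $n$ inputs has exactly $n \log n$ non-input vertices, and each must be computed by at least one subcomputation for correctness. For each such vertex $v$, let $c_v \geq 1$ denote the number of subcomputations that compute $v$, and assign weight $1/c_v$ to each such instance. Summing across all subcomputations gives a total weight of exactly $n \log n$ (every vertex contributes $c_v \cdot 1/c_v = 1$). It then suffices to bound the weight received by a single subcomputation.

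Now fix a subcomputation $S$ with at most $\ell$ inputs and $m$ outputs, and let $V_S$ be the set of FFT DAG vertices that $S$ computes. The set $V_S$ is ``closed'' in the sense that each vertex's FFT-predecessors lie either in $V_S$ itself or among $S$'s $\ell$ inputs, and at most $m$ vertices of $V_S$ have their values persist beyond $S$. A direct application of the butterfly structure yields only $|V_S| \leq O(\ell \log \ell)$ (because $\ell$ inputs span at most $\log \ell$ productive butterfly levels, and each level contributes at most $\ell$ useful vertices), which is too weak when $m \ll \ell$. The key observation is that any $v \in V_S$ whose value does not escape $S$ as an output is, in effect, wasted: $v$ is a non-input FFT vertex needed by the overall computation, so some other subcomputation must recompute $v$, increasing $c_v$ and therefore shrinking the fractional weight $1/c_v$ charged to $S$. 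Decomposing $V_S$ into sub-butterflies rooted at $S$'s outputs and charging each productive level of depth to an output, the level-wise contributions telescope to at most $m$ per level over at most $\log \ell$ levels of depth, yielding $w(S) \leq O(m \log \ell)$. Combining, $n \log n = \sum_S w(S) \leq N \cdot O(m \log \ell)$, which is the claimed bound.

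The hard part will be making the per-subcomputation bound on $w(S)$ precise. The intuition --- that wasteful internal computation in one subcomputation is paid for by increased recomputation elsewhere --- relies on the specific recursive structure of the FFT and cannot be obtained merely from the closure property of $V_S$. In particular, a Hong--Kung style partitioning of $V_S$ into sub-butterflies bounds $|V_S|$ only by $O(\ell \log \ell)$, which would give the weaker $N \geq \Omega(n \log n/(\ell \log \ell))$; replacing $\ell$ by $m$ in the numerator of the denominator requires globally relating the multiplicities $c_v$ across subcomputations to the way the $V_S$'s overlap in the butterfly, which is exactly where the novel fractional accounting needs to do real work.
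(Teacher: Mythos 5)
Your accounting is genuinely different from the paper's, and the step you yourself flag as ``the hard part'' is not merely unfinished---the per-subcomputation bound your scheme needs is false. You give each computed instance of a vertex $v$ weight $1/c_v$, so a subcomputation $S$ is charged $w(S)=\sum_{v\in V_S}1/c_v$, and you need $w(S)=O(m\log\ell)$. Consider a subcomputation $S_0$ that evaluates an entire $\ell$-input sub-butterfly $B$ (about $\ell\log\ell$ vertices) whose inputs are inputs of the full FFT, and writes out only $m$ of $B$'s $\ell$ top-level vertices. Each discarded top vertex $u$ can be recomputed by exactly one later subcomputation (one that computes both of $u$'s successors, rebuilding $u$'s backward cone from the original inputs), so $c_u=2$ is attainable; these $\ell-m$ vertices alone contribute $(\ell-m)/2$ to $w(S_0)$. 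With $\ell=(\wcost+1)\memsize$ and $m=2\memsize$ as in the application, $(\ell-m)/2\gg m\log\ell$ whenever $\wcost\gg\log(\wcost \memsize)$. The lemma still holds in this configuration---the many recomputing subcomputations make up the count---but your strategy of bounding each subcomputation's weight separately cannot prove it. The root problem is that $1/c_v$ is a global quantity that does not record \emph{which} subcomputation wasted work, so wasteful internal computation in $S_0$ is not reliably compensated by large multiplicities.

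The paper sidesteps this by routing weight forward rather than dividing by multiplicity. Each non-input vertex's unit weight is pushed along its two out-edges, halving at each step, until absorbed by a \emph{partition output vertex} (a vertex whose value actually escapes some subcomputation); each partition output vertex is assigned to one subcomputation that outputs it. A subcomputation is thus charged only through its at most $m$ outputs, never through vertices it computes purely internally. The weight absorbed by one output $y$ of $S$ comes only from ancestors of $y$ reachable backward through non-partition-output vertices; all of these lie in $V_S$ (a subcomputation must itself compute any non-readable predecessor of anything it computes), they form a binary tree whose leaves are among $S$'s at most $\ell$ inputs, and a vertex at depth $i$ contributes $2^{-i}$, so the absorbed weight is at most $\log\ell$, maximized by the balanced tree. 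In the counterexample above, the weight of $B$'s interior simply flows past $S_0$'s discarded top vertices and is charged to whichever later subcomputations eventually output something downstream. Making your multiplicity intuition precise would require relating the $c_v$ to exactly this forward structure, at which point you have reproduced the paper's argument; I recommend adopting the forward-flow accounting directly.
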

\begin{proof}
We refer to all vertices whose values are outputs of any
subcomputation, as \emph{partition output vertices}.  We assign each
such vertex arbitrarily to one of the subcomputations for which it
is an output.

Consider the following accounting scheme for fractionally assigning a
unit weight for each non-input vertex to some set of partition output
vertices.  If a vertex is a partition output vertex, then assign the
weight to itself.  Otherwise take the weight, divide it evenly between
its two immediate descendants (out edges) in the FFT DAG, and
recursively assign that weight to each.  For example, for a vertex $x$
that is not a partition output vertex, if an immediate descendant $y$
is a partition output vertex, then $y$ gets a weight of $1/2$ from
$x$, but if not and one of $y$'s immediate descendants $z$ is, then
$z$ gets a weight of $1/4$ from $x$.  Since each non-input vertex is
fully assigned across some partition output vertices, the sum of the
weights assigned across the partition output vertices exactly equals
$|V|-n = n \log n$.  We now argue that every partition output vertex
can have at most $\log \icount$ weight assigned to it.  Looking back
from an output vertex we see a binary tree rooted at the output.  If
we follow each branch of the tree until we reach an input for the
subcomputation, we get a tree with at most $\icount$ leaves, since
there are at most $\icount$ inputs and at most a single path from
every vertex to the output.  The contribution of each vertex in the
tree to the output is $1/2^i$, where $i$ is its depth (the root is
depth $0$).  The leaves (subcomputation inputs) are not included since
they are partition output vertices themselves, or inputs to the whole
computation, which we have excluded.  By induction on the tree
structure, the weight of that tree is maximized when it is perfectly
balanced, which gives a total weight of $\log \icount$.

Therefore since every subcomputation can have at most $\ocount$
outputs, the total weight assigned to each subcomputation is at most
$\ocount \log \icount$.  Since the total weight across all
subcomputations is $n \log n$, the total number of subcomputations is
at least $n \log n / (\ocount \log \icount)$.
\end{proof}

\begin{theorem}[FFT Lower Bound]
  Any solution to the DAG computation problem on the family of FFT
  DAGs parametrized by input size $n$ has costs $\displaystyle
  Q(n) = \Omega\left({\wcost n \log n}/{\log (\wcost
      \memsize)}\right)$ and $\W(n) = \Omega(Q(n) + n \log n)$ on the
  \ourmodel.
\end{theorem}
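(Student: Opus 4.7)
The plan is to combine the two lemmas directly. First I would invoke Lemma~\ref{lem:partitions} on any computation that solves the FFT DAG problem to obtain an $((\wcost+1)\memsize,\,2\memsize)$-partitioning in which all but (at most) one subcomputation has \iocost{} at least $\wcost\memsize$. Then I would apply Lemma~\ref{lem:fftpartitions} with $\icount = (\wcost+1)\memsize$ and $\ocount = 2\memsize$ to conclude that the number of subcomputations is at least
\[
\frac{n \log n}{2\memsize \cdot \log((\wcost+1)\memsize)}.
\]

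Multiplying the number of ``heavy'' subcomputations (all but one) by the per-subcomputation cost lower bound $\wcost \memsize$ yields
\[
Q(n) \;\geq\; \wcost\memsize \left(\frac{n \log n}{2\memsize \log((\wcost+1)\memsize)} - 1\right) \;=\; \Omega\!\left(\frac{\wcost n \log n}{\log(\wcost \memsize)}\right),
\]
where the $-1$ absorbs the possibly cheap final subcomputation and is dominated once $n$ is large. I should also note that $\log((\wcost+1)\memsize) = \Theta(\log(\wcost \memsize))$ when $\wcost, \memsize \geq 1$, so the asymptotic form in the theorem is unaffected.

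For the \work{} bound $\W(n) = \Omega(Q(n) + n \log n)$, the $Q(n)$ term is immediate from the definition since $\W \geq Q$. The $n \log n$ term requires arguing that every one of the $n \log n$ non-input vertices of the FFT DAG must be computed at least once, and each such computation requires at least one instruction, which in this RAM-style model is charged to at least one memory access (either to \fastmem{} or to \slowmem). The only subtlety is handling recomputation, but since the lower bound only claims each vertex is touched \emph{at least} once, recomputation can only add work. I expect the mechanical combination of Lemmas~\ref{lem:partitions} and \ref{lem:fftpartitions} to be routine; the one place to be careful is ensuring that the ``at most one cheap subcomputation'' exception in Lemma~\ref{lem:partitions} is correctly netted out against the subcomputation count from Lemma~\ref{lem:fftpartitions}, which is why the $-1$ term appears above and must be shown to be lower-order.
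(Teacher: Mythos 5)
Your proposal is correct and follows essentially the same route as the paper's proof: apply Lemma~\ref{lem:partitions} to get an $((\wcost+1)\memsize,2\memsize)$-partitioning with at most one cheap subcomputation, plug into Lemma~\ref{lem:fftpartitions} to lower-bound the number of subcomputations, and multiply by the per-subcomputation cost $\wcost\memsize$; the \work{} bound then follows by adding the cost of computing each of the $n\log n$ vertices. Your extra care with the $-1$ term and the observation that $\log((\wcost+1)\memsize)=\Theta(\log(\wcost\memsize))$ are details the paper elides but do not change the argument.
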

\begin{proof}
  By Lemma~\ref{lem:partitions} every computation must have an
  $((\wcost + 1)\memsize,2\memsize)$-partitioning with subcomputation
  cost $Q \geq \wcost \memsize$ (except perhaps one).  Plugging in
  Lemma~\ref{lem:fftpartitions} we have $Q(n) \geq \wcost \memsize n
  \log n / (2\memsize \log ((\wcost + 1)\memsize))$, which gives our
  bound on $Q(n)$.  For $\W(n)$ we just add in the cost of the
  computation of each vertex.
\end{proof}

Note that whichever of $\wcost$ and $\memsize$ is larger will dominate
in the denominator of $Q(n)$.  When $\wcost \leq \memsize$, these
lower bounds match those for the standard external memory
model~\cite{HK81,AggarwalV88} assuming both reads and writes have cost
$\wcost$.  This implies that cheaper reads do not help asymptotically
in this case.  When $\wcost > \memsize$, however, there is a potential
asymptotic advantage for the cheaper reads.

\myparagraph{Sorting Networks} A sorting network is a acyclic network
of comparators, each of which takes two input keys and returns the
minimum of the keys on one output, and the maximum on the other.  For
a family of sorting networks parametrized by $n$, each network takes
$n$ inputs, has $n$ ordered outputs, and when propagating the inputs
to the outputs must place the keys in sorted order on the outputs.  A
sorting network can be modeled as a DAG in the obvious way.  Ajtai,
Koml\'{o}s and Szemer{\'e}di~\cite{AKS83} described a family of
sorting networks that have size $O(n \log n)$ and depth $O(\log n)$.
Their algorithm is complicated and the constants are very large.  Many
simplifications and constant factor improvements have been made,
including the well known Patterson variant~\cite{Paterson90} and a
simplification by Seiferas~\cite{Seiferas09}.  Recently
Goodrich~\cite{Goodrich14} gave a much simpler construction of an $O(n
\log n)$ size network, but it requires polynomial depth.  Here we show
lower bounds of simulating any sorting network on the \ourmodel.

\begin{theorem}[Sorting Lower Bound]
\label{lem:sortinglower}
Simulating any family of sorting networks parametrized on input size
$n$ has $\displaystyle Q(n) = \Omega\left(\frac{\wcost n \log n}{\log
    (\wcost \memsize)}\right)$ and $\W(n) = \Omega(Q(n) + n \log n)$ on
the \ourmodel.
\end{theorem}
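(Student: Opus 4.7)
The plan is to mirror the FFT lower bound proof. By Lemma~\ref{lem:partitions}, any simulation of a sorting network admits an $((\wcost+1)\memsize, 2\memsize)$-partitioning in which all but at most one subcomputation has \iocost{} at least $\wcost\memsize$. Given a sorting-network analog of Lemma~\ref{lem:fftpartitions}, i.e., that any $(\icount,\ocount)$-partitioning of such a simulation contains $\Omega(n \log n / (\ocount \log \icount))$ subcomputations, the same arithmetic used in the FFT theorem immediately gives $Q(n) = \Omega(\wcost n \log n / \log(\wcost\memsize))$; charging the additional unit cost for evaluating each DAG vertex then yields $\W(n) = \Omega(Q(n) + n \log n)$. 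So the whole proof reduces to establishing the sorting-network analog of Lemma~\ref{lem:fftpartitions}.

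To obtain that analog, my first approach would be a reduction. A sorting network on $n$ keys can realize any permutation of its inputs simply by choosing key values that force the desired input-to-output routing, so any family of sorting networks doubles as a family of permutation networks. Permutation networks of width $n$ subsume the data-movement skeleton of FFT DAGs, so one can hope to import the fractional weight accounting of Lemma~\ref{lem:fftpartitions} directly onto this underlying FFT-like dependency structure, keeping the per-subcomputation weight at most $\ocount \log \icount$ and hence forcing $\Omega(n \log n / (\ocount \log \icount))$ subcomputations. An alternative is a direct argument via the zero-one principle: restricting to binary inputs, each comparator output must correctly distinguish $n{+}1$ possible input weights, which pins down $\Omega(n \log n)$ ``critical'' vertices whose back-cones have depth $\Omega(\log n)$; running the fractional assignment on this set gives the same bound.

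The main obstacle is that a general sorting network need not have the unique-path property on which the accounting in Lemma~\ref{lem:fftpartitions} crucially relied---the back-cone of a partition-output vertex is a DAG rather than a binary tree, so a priori the weight assigned to it could exceed $\log \icount$. The hard part will therefore be either (i) embedding the accounting in an auxiliary FFT-like DAG (via the permutation-network reduction) where paths are unique by construction, or (ii) generalizing the weight argument to DAG back-cones while still proving the per-vertex bound of $\log \icount$. Either route seems workable, but pinning down the exact structural claim---and verifying that it survives for \emph{every} sorting network in a family, including irregular ones like Goodrich's---is where I expect the work to concentrate.
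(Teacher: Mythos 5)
Your high-level skeleton (combine Lemma~\ref{lem:partitions} with a lower bound on the number of subcomputations in any $(\icount,\ocount)$-partitioning) matches the paper, but the crucial counting lemma is exactly where your proposal has a genuine gap, and the paper obtains that lemma by a mechanism different from both routes you sketch. You correctly identify the obstacle: the fractional weight-assignment in Lemma~\ref{lem:fftpartitions} relies on the unique-path property of the FFT DAG, and in a general comparator network the back-cone of a partition output vertex is a DAG, so weight reaches a vertex along many paths and the per-vertex bound of $\log \icount$ simply fails. (A width-two layered back-cone of depth $D$, with each vertex feeding both vertices of the next layer, accumulates weight $\Theta(D)$ at an output even though it has only two subcomputation inputs.) Neither proposed repair closes this. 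The permutation-network reduction does not supply an FFT-like DAG on which to redo the accounting: an arbitrary sorting network need not contain a butterfly (AKS is built from expanders, Goodrich's network has polynomial depth), and the subcomputations are blocks of the instruction stream simulating the \emph{actual} network, so there is no auxiliary DAG to which the argument transfers. The zero-one-principle route likewise terminates at the very obstacle you name: you would still need a per-output weight bound for DAG back-cones, which is false in general.

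The paper abandons the weight-assignment idea for sorting and instead uses an information-theoretic permutation count that exploits obliviousness. Since a sorting network's data movement is fixed independently of the key values, a subcomputation with at most $\icount$ inputs and $\ocount$ outputs can realize at most $\binom{\icount}{\ocount}\ocount! = \icount!/(\icount-\ocount)! < \icount^{\ocount}$ distinct assignments of input keys to its output locations, so $k$ subcomputations can distinguish at most $(\icount^{\ocount})^k$ of the $n!$ input permutations. This forces $k \geq \log(n!)/(\ocount \log \icount) = \Omega(n\log n/(\ocount \log \icount))$, after which the arithmetic with Lemma~\ref{lem:partitions} proceeds as you describe. That counting argument is the missing lemma in your plan; as written, your proposal identifies the right target but does not prove it.
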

\begin{proof}
  Consider an $(\icount,\ocount)$-partitioning of the computation.
  Each subcomputation has at most $\icount$ inputs from the network,
  and $\ocount$ outputs for the network.  The computation is oblivious
  to the values in the network (it can only place the min and max on
the outputs of each comparator).  Therefore locations of the inputs and outputs are fixed
  independent of input values.  The total number of choices the
  subcomputation has is therefore $\binom{\icount}{\ocount} \ocount! =
  \icount! / (\icount-\ocount)! < \icount^{\ocount}$.  Since there are
  $n!$ possible permutations, we have that the number of
  subcomputations $k$ must satisfy $(\icount^{\ocount})^k \geq n!$.
  Taking logs of both sides, rearranging, and using Stirling's formula
  we have $k > \log(n!)/(\ocount \log \icount) > \frac{1}{2} n \log
  n/(\ocount \log \icount)$ (for $n > e^2$).  By
  Lemma~\ref{lem:partitions} we have $Q(n) > \wcost \memsize
  \frac{1}{2} n \log n /(2\memsize \log ((1 + \wcost) \memsize)) =
  \frac{1}{4}\wcost n \log n/\log ((1 + \wcost)\memsize)$ (for $n >
  e^2$).
\end{proof}

These bounds are the same as for simulating an FFT DAG, and, as with
FFTs, they indicate that faster reads do not asymptotically affect the lower bound unless $\wcost >
\memsize$.  These lower bounds rely on the sort being done on a
network, and in particular that the location of all read and writes
are oblivious to the data itself.  As discussed in the next section,
for general comparison sorting algorithms, we can get better upper
bounds than indicated by these lower bounds.

\myparagraph{Diamond DAG}
We consider the family of \emph{diamond DAGs} parametrized on size
$n$.  Each DAG has $n^2$ vertices arranged in a $n\times n$ grid such
that every vertex $(i,j), 0 \leq i < (n-1) ,0 \leq j < (n-1)$ has two
out-edges to $(i+1,j)$ and $(i,j+1)$.  The DAG has one input at
$(0,0)$ and one output at $(n-1,n-1)$.
Diamond DAGs have many applications in dynamic programs, such as for
the edit distance (ED), longest common subsequence (LCS), and optimal
sequence alignment problems.

\begin{lemma}[Cook and Sethi, 1976]
\label{lem:pspace}
Solving the DAG computation problem on the family of diamond DAGs of
input parameter $n$ (size $n \times n$) requires $n$ space to store
vertex values from the DAG.
\end{lemma}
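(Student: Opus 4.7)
The plan is to cast the \iocost{}/space question as a black-pebble game on the diamond DAG. A pebble on vertex $v$ represents that $v$'s computed value is currently held in some memory location (either \fastmem{} or \slowmem{}); placing a pebble on $v$ requires pebbles on all immediate predecessors of $v$ (and models a unit-cost computation step); pebbles may be removed at any time to represent memory reuse; and the game ends once the output vertex $(n-1,n-1)$ has been pebbled at some moment. The number of simultaneous pebbles then lower-bounds the total number of distinct memory locations currently holding values of DAG vertices, so the lemma reduces to proving that any valid pebbling of the $n \times n$ diamond DAG requires $n$ simultaneous pebbles at some step, even when recomputation is permitted.

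To prove the pebbling bound I would identify the middle antichain $A = \{(i,j) : i+j = n-1\}$, which has size $n$ and through which every source-to-sink path must pass. Adapting the classical Cook--Sethi adversary, for each $v \in A$ let $\tau(v)$ denote the \emph{last} time during the execution at which $v$ carries a pebble; this is well-defined since $v$ lies on a source-to-sink path and so must be pebbled at least once. Order $A = \{v_1,\ldots,v_n\}$ so that $\tau(v_1) < \cdots < \tau(v_n)$ and consider the instant $t^{\ast} = \tau(v_n)$. The central claim is that at time $t^{\ast}$ every $v_i$ with $i < n$ is also pebbled: if some such $v_i$ had already been removed, then because $v_i$'s value still has to contribute to the output (which is produced after $t^{\ast}$), and because $v_i$ is an antichain element whose downstream influence on the output is not fully captured by any surviving pebble, $v_i$ would have to be re-pebbled strictly after $t^{\ast}$, contradicting the maximality of $\tau(v_i)$.

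The main obstacle will be making the ``downstream influence not captured by any surviving pebble'' step rigorous once recomputation is allowed: in principle a pebble on a descendant of $v_i$ might encode enough aggregated information about several antichain vertices to avoid ever needing $v_i$ again. To rule this out I would employ a standard adversarial straight-line-program argument, treating the inputs as symbolic indeterminates and using a dimension / linear-independence argument to show that the output depends nontrivially and separately on each $v_i$, so that no single downstream pebble can substitute for $v_i$'s own. Combined with the pebble-game reduction, this forces $n$ simultaneous pebbles at time $t^{\ast}$ and yields the claimed space lower bound of $n$ words.
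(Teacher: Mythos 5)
Your reduction to the black-pebble game is the right framing (and matches how the paper models the DAG computation problem), but the core of your argument breaks down in two ways. First, it is internally inconsistent: you define $\tau(v_i)$ as the \emph{last} time $v_i$ carries a pebble and order the antichain so that $\tau(v_1) < \cdots < \tau(v_n) = t^{*}$; the ``central claim'' that every $v_i$ with $i<n$ still carries a pebble at time $t^{*}$ would force $\tau(v_i) \geq t^{*}$, contradicting that ordering. Second, and more fundamentally, the central claim is false: the standard row-by-row sweep pebbles the $n\times n$ grid with $O(n)$ pebbles while never holding pebbles on more than a couple of vertices of the anti-diagonal $\{(i,j) : i+j = n-1\}$ at once, so no correct argument can conclude that all $n$ antichain vertices are simultaneously pebbled. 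The underlying error is the step ``$v_i$'s value still has to contribute to the output, hence $v_i$ must be re-pebbled after $t^{*}$'': in the pebble game a vertex is needed only at the instants its immediate successors are pebbled, so once those successors have absorbed $v_i$'s value it never needs to reappear. Your proposed patch via symbolic indeterminates and linear independence speaks to an algebraic straight-line-program model rather than the pebbling model the lemma concerns, and in any case cannot rescue a claim that a legal pebbling strategy violates.

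For comparison, the paper does not reprove the bound at all: it cites Cook and Sethi's 1976 result that pebbling the pyramid (the top half $i+j \geq n-1$ of the diamond) requires $n$ space, and observes that every path in the diamond must pass through that pyramid. The actual Cook--Sethi argument is a frontier/blocked-path induction: one tracks pebble-free paths from the pyramid's inputs to the apex and shows that closing all of them forces $n$ pebbles \emph{somewhere} in the configuration at some instant --- not necessarily on the anti-diagonal. If you want a self-contained proof, that is the argument to reconstruct; the ``all antichain vertices pebbled simultaneously'' route cannot work.
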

\begin{proof}
  Cook and Sethi~\cite{CS76} show that evaluating the top half of a
  diamond DAG ($i + j \geq n-1$) , which they call a pyramid DAG,
  requires $n$ space to store partial results.  Since all paths of the
  diamond DAG must go through the top half, it follows for the diamond
  DAG.
\end{proof}

\begin{theorem}[Diamond DAG Lower Bound]
\label{lem:diamond}
The family of diamond DAGs parametrized on input size $n$ has
$\displaystyle Q(n) = \Omega\left(\frac{\wcost n^2}{\memsize}\right)$
and $\W(n) = \Omega(Q(n) + n^2)$ on the \ourmodel.
\end{theorem}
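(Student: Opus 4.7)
My plan is to derive the $Q(n)$ bound by tiling the DAG into many disjoint sub-diamonds and applying the Cook-Sethi space bound (Lemma~\ref{lem:pspace}) to each, then converting the space requirement into a lower bound on writes to \slowmem{}. The $\W(n)$ bound will follow from a direct work count.

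First I would partition the $n \times n$ diamond DAG into $(n/s)^2$ disjoint $s \times s$ sub-diamonds, for a parameter $s = 2\memsize$. Each such sub-diamond is itself an instance of a diamond DAG of size $s$ sitting inside the larger DAG: the value at its bottom-right corner is determined, through the DAG dependencies, by values along its upper-left boundary. Applying Lemma~\ref{lem:pspace} to this sub-instance gives that at some moment of the overall computation at least $s$ of the sub-diamond's vertex values must be simultaneously resident in memory. Since \fastmem{} holds at most $\memsize$ values, at least $s - \memsize = \memsize$ of these live sub-diamond values must reside in \slowmem{} at the critical moment, and each of them must have been written to \slowmem{} at least once beforehand.

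Then I would sum these writes across sub-diamonds. Since the sub-diamonds partition the vertices of the DAG, the write events charged to different sub-diamonds involve disjoint vertex values and are therefore distinct write events, so the total number of writes to \slowmem{} is at least $(n/s)^2 (s - \memsize) = n^2/(4\memsize)$. Multiplying by the per-write cost $\wcost$ establishes $Q(n) = \Omega(\wcost n^2/\memsize)$. For the $\W(n)$ bound, each of the $\Theta(n^2)$ non-input DAG vertices must be computed at least once, contributing $\Omega(n^2)$ to $\W(n)$, and $\W(n) \geq Q(n)$ by definition, yielding $\W(n) = \Omega(Q(n) + n^2)$.

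The main obstacle will be the first step: rigorously arguing that Lemma~\ref{lem:pspace} transfers to a sub-diamond embedded in a larger computation where arbitrary recomputation is allowed. The key observation I would use is that any valid pebbling (or memory) strategy for the whole computation, restricted to moves that manipulate values at sub-diamond vertices, is itself a valid strategy for computing the sub-diamond's output with its upper-left boundary values treated as externally supplied inputs, so the $s$-space lower bound from Cook-Sethi continues to apply to the sub-diamond irrespective of what the rest of the computation is doing.
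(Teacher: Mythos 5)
Your proposal is correct and follows essentially the same route as the paper: tile the DAG into disjoint $2\memsize \times 2\memsize$ sub-diamonds, invoke the Cook--Sethi space bound on each to force $\memsize$ values into \slowmem{} (and hence $\memsize$ writes of values interior to that tile), and sum over the $n^2/(2\memsize)^2$ tiles. Your closing remark about why the space bound survives embedding in the larger computation is exactly the point the paper dispatches with the observation that the extra boundary in-edges and out-edges can only make the sub-problem harder.
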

\begin{proof}
  Consider the sub-DAG induced by a $2\memsize \times 2\memsize$
  diamond ($a \leq i < a + 2\memsize, b \leq j < b + 2\memsize$) of
  vertices.  By Lemma~\ref{lem:pspace} any subcomputation that
  computes the last output vertex of the sub-DAG requires $2\memsize$
  memory to store values from the diamond.  The extra in-edges along
  two sides and out-edges along the other two can only make the
  problem harder.  Half of the $2\memsize$ required memory can be from
  \fastmem, so the remaining $\memsize$ must require writing those
  values to \slowmem.  Therefore every $2\memsize \times 2\memsize$
  diamond requires $\memsize$ writes of values within the diamond.
  Partitioning the full diamond DAG into $2\memsize \times 2\memsize$
  sub-diamonds, gives us $n^2/(2\memsize)^2$ partitions.  Therefore
  the total number of writes is at least $\displaystyle \memsize
  \times \frac{n^2}{(2\memsize)^2} = \frac{n^2}{4\memsize}$, each with
  cost $\wcost$.  For the \work{} we need to add the $n^2$ calculations
  for all vertex values.
\end{proof}

\noindent
This lower bound is asymptotically tight since a diamond DAG can be
evaluated with matching upper bounds by evaluating each $M/2 \times
M/2$ diamond sub-DAG as a subcomputation with $M$ inputs, outputs and
memory.

These bounds show that for the DAG computation problem on the family
of diamond DAGs there is no asymptotic advantage of having cheaper
reads.  In Section~\ref{sec:lcs} we show that for the ED and LCS
problems (normally thought of as a diamond DAG computation), it is
possible to do better than the lower bounds.  This requires breaking
the DAG computation rule by partially computing the values of each
vertex before all inputs are ready.  The lower bounds are interesting
since they show that improving asymptotic performance with cheaper
reads requires breaking the DAG computation rules.

\section{Upper Bounds}\label{sec:upper-intro}

We start in this section by showing that a variety of problems have
reasonably easy optimal upper bounds.  In the two sections that follow
and Appendix~\ref{sec:MST} we study problems that are more challenging.

\subparagraph*{FFT}
For the FFT we can match the lower bound using the algorithm described
elsewhere~\cite{BFGGS15}, although in that case the computation cost
was not considered.  The idea is to first split the DAG into layers of
$\log(\wcost \memsize)$ levels.  Then divide each layer so that the
last $\log \memsize$ levels are partitioned into FFT networks of
output size $\memsize$.  Attach to each partition all needed inputs
from the layer and the vertices needed to reach them (note that these
vertices will overlap among partitions).  Each extended partition will
have $\wcost \memsize$ inputs and $\memsize$ outputs, and can be
computed in $\memsize$ \fastmem{} with $Q = O(\wcost \memsize)$, and
$\W = O((\wcost + \log \memsize) \memsize)$.  This gives a total upper
bound of $Q = O(\wcost \memsize \times n \log n /(\memsize \log(\wcost
\memsize))) = O(\wcost n \log n /\log (\wcost \memsize))$, and $\W =
O(Q(n) + n \log n)$, which matches the lower bound (asymptotically).
All computations are done within the DAG model.

\subparagraph*{Search Trees and Priority Queues}
We now consider algorithms for some problems that can
be implemented efficiently using
balanced binary search trees.  In the following
discussion we assume $\memsize = O(1)$.  Red-black trees with
appropriate rebalancing rules require only $O(1)$ amortized \work{}
per update (insertion or deletion) once the location for the key is
found~\cite{Tarjan83}.  For a tree of size $n$ finding a key's
location uses $O(\log n)$ reads but no writes, so the total amortized
cost $Q = \W = O(\wcost + \log n)$ per update in the \ourmodel.
For arbitrary sequences of searches and updates, $\Omega(\wcost + \log n)$
is a matching lower bound on the amortized cost per operation
when $\memsize = O(1)$.
Because priority queues can be implemented with a binary search tree,
insertion and delete-min have the same bounds.  It seems more
difficult, however, to reduce the number of writes for priority queues
that support efficient melding or decrease-key.

\subparagraph*{Sorting}
Sorting can be implemented with $Q = \W = O(n (\log n + \wcost))$ by
inserting all keys into a red-black tree and then reading them off in
priority order~\cite{BFGGS15}.  We note that this bound on \work{} is
better than the sorting network lower bound
(Theorem~\ref{lem:sortinglower}). For example, when $\wcost = \memsize
= \log n$ it gives a factor of $\log n / \log \log n$ improvement.
The additional power is a consequence of being able to randomly write
to one of $n$ locations at the leaves of the tree for each insertion.
The bound is optimal for $\W$ because $n$ writes are required for the
output and comparison-based sorting requires $O(n \log n)$ operations.

\subparagraph*{Convex Hull and Triangulation}
A variety of problems in computational geometry can be solved
optimally using balanced trees and sorting.  The planar convex-hull
problem can be solved by first sorting the points by $x$ coordinates
and then either using Overmars' technique or Graham's
scan~\cite{DCKO08}.  In both cases, the second part takes linear \work{}
so the overall cost is $O(\mb{Sort}(n))$.  The planar Delaunay
triangulation problem can be solved efficiently with the plane sweep
method~\cite{DCKO08}.  This involves maintaining a priority queue on
$x$ coordinate, and maintaining a balanced binary search tree on the
$y$ coordinate.  A total of $O(n)$ operations are required on each,
again giving bounds $O(\mb{Sort}(n))$.

\subparagraph*{BFS and DFS} Breadth-first and depth-first search can be
performed with $Q = \W = O(\wcost n + m)$.  In particular each vertex
only requires a constant number of writes when it is first added to
the frontier (the stack or queue) and a constant number of writes when
it is finished (removed from the stack or queue).  Searches along an
edge to an already visited vertex require no writes.  This implies
that several problems based on BFS and DFS also only require $Q = \W =
O(\mb{DFS}(n))$.  Such problems include topological sort, biconnected
components, and strongly connected components.  The analysis is based
on the fact that there are only $O(n)$ forward edges in the DFS.
However when using priority-first search on a weighted graph (e.g.,
Dijkstra's or Prim's algorithm) then the problem is more difficult to
perform optimally as the priority queue might need to
be updated for every visited edge.

\subparagraph*{Dynamic Programming}
With regards to dynamic programming, some problems are reasonably easy
and some harder.
We covered LCS and ED in Section~\ref{sec:lcs}.
The standard Floyd-Warshall algorithm for the
all-pairs shortest-path (APSP) problem uses $O(n^3)$ writes.  However,
by rearranging the loops and carefully scheduling the writes it is
possible to implement the algorithm using only $O(n^2)$
writes and $O(n^3)$ reads, giving
$\W = O(\wcost n^2 + n^3)$ (Appendix~\ref{sec:fw}).
This version, however, is not
efficient in terms of $Q$.  Kleene's divide-and-conquer
algorithm~\cite{Aho74} can be used to reduce the \iocost{}~\cite{PPP04}.
Each recursive call makes two calls to itself on problems of half the
size, and six calls to matrix multiply over the semiring $(\min,+)$.
Here we analyze the algorithm in the \ourmodel.  The matrix multiplies
on two matrices of size $n \times n$ can be done in the model in
$Q_{\memsize}(n) = O(n^2(\wcost + n/\sqrt{\memsize}))$~\cite{BFGGS15}.
This leads to the recurrence
\[Q_{\smb{Kleene}}(n) = 2Q_{\smb{Kleene}}(n/2) + O(Q_{\memsize}(n)) + O(\wcost n^2)\]
which solves to $Q_{\smb{Kleene}}(n) = O(Q_{\memsize}(n))$ because the
cost is dominated at the root of the recurrence.  It is not known whether
this is optimal.  A similar approach
can be used for several other problems, including sequence alignment
with gaps, optimal binary search trees, and matrix chain
multiplication~\cite{CR06}.

Longest common subsequence (LCS) and edit distance (ED) are more
challenging, and covered next.

\section{Longest Common Subsequence and Edit Distance}
\label{sec:lcs}

This section describes a more efficient dynamic-programming algorithm
for longest common subsequence (LCS) and edit distance (ED).  The standard
approach for these problems (an $\memsize\times \memsize$ tiling)
results in an \iocost{} of $O(mn \wcost/\memsize)$ and \work{} of
$O(mn+mn\wcost/\memsize)$, where $m$ and $n$ are the length of the two
input strings.  Lemma~\ref{lem:diamond} states that the standard bound
is optimal under the standard DAG computation rule that all inputs
must be available before evaluating a node.  Perhaps surprisingly, we
are able to beat these bounds by leveraging the fact that dynamic
programs do not perform arbitrary functions at each node, and hence we
do not necessarily need all inputs to begin evaluating a node.

Our main result is captured by the following theorem for large input
strings.  For smaller strings, we can do even better (see Appendix~\ref{sec:LCS-app}).
\begin{theorem}\label{thm:LCSbound}
  Let $k_T=\min((\wcost{}/\memsize)^{1/3},\sqrt{\memsize})$ and
  suppose $m,n = \Omega(k_T\memsize)$.  Then it is possible to compute
  the ED or length of the LCS with \work{} $\W(m,n) =
  O(mn+mn\wcost/(k_T\memsize))$.

  Let $k_Q=\min(\wcost^{1/3},\sqrt{\memsize})$ and suppose
  $m,n=\Omega(k_Q\memsize)$.  Then it is possible to compute the ED or
  length of the LCS with an \iocost{} of
  $Q(m,n)=O(mn\wcost/(k_Q\memsize))$.
\end{theorem}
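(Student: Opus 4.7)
The plan is to exploit a structural property of LCS and ED that a generic diamond-DAG algorithm ignores: adjacent entries of the DP table differ by a bounded constant ($0$ or $1$ for LCS and standard ED). Hence an $\ell$-cell path (row, column, or more general monotone path) through the table can be encoded by one base value plus $O(\ell)$ bits of unit differences, a \emph{path sketch} that packs into $O(\ell / \log n)$ words. The lower bound of Lemma~\ref{lem:diamond} is tight only under the DAG rule that every cell's full value is materialized before the cell is read; path sketches break that rule by letting us store a compressed representation of many cells' partial information at once, and then decode only the portions actually needed.

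I would use a two-level tiling. The outer level partitions the table into macro-blocks of size $s \times s$ with $s = \Theta(k\memsize)$, processed in row-major order, taking the top row and left column from the already-processed neighbors above and to the left. Within a macro-block I would sweep through sub-blocks of size about $\memsize \times \memsize$ (or $\sqrt{\memsize}\times\sqrt{\memsize}$ for the $\W$ bound); interior sub-block boundaries are kept as path sketches---cached in \fastmem{} when possible, written compactly to \slowmem{} otherwise---while the macro-block's output boundary is written to \slowmem{} in full, costing $O(s)$ writes per macro-block. When a sub-block needs a sketched boundary as input, the sketch is expanded inside \fastmem{} before the $\memsize\times\memsize$ DP is evaluated. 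This yields $O(mn/s) = O(mn/(k\memsize))$ full writes and hence an \iocost{} of $O(mn\wcost/(k\memsize))$. The parameter choice $k_Q=\min(\wcost^{1/3},\sqrt{\memsize})$ falls out of two constraints: a sketched $s$-cell row must fit in $\memsize$ words (giving $k \leq \sqrt{\memsize}$ after accounting for the per-word bookkeeping), and the recomputation cost paid each time a sketch is expanded must not dominate the base DP cost (giving $k \leq \wcost^{1/3}$). The $\W$ bound uses $k_T=\min((\wcost/\memsize)^{1/3},\sqrt{\memsize})$, which additionally absorbs the per-cell \fastmem{} work along the extra reads.

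The main obstacle is correctness together with the scheduling that implements the recomputation bookkeeping. Unlike a textbook diamond-DAG evaluation in which every cell is produced exactly once from its three immediate predecessors, here some predecessor data arrives only in sketched form and must be correctly decoded before a sub-block can run. The critical invariant to establish is that the unit-difference property is preserved along every sub-block boundary the algorithm emits, so that each stored sketch really is the difference sequence of the true DP values; once this is in place, global correctness follows by induction on the macro-block order. The hardest part will be scheduling the sketch creation, caching, and recomputation inside a macro-block so that the total reads match the target bound while \fastmem{} never overflows---it is this tradeoff between recomputation depth and sketch granularity that pins down the factor $k$ in both bounds.
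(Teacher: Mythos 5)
Your proposal has a genuine gap: the object you call a ``path sketch'' is not the one the paper uses, and it cannot deliver the claimed savings. You propose to exploit the unit-difference property of the LCS/ED table and store sub-block boundaries as one base value plus $O(\ell)$ difference bits, i.e., $O(\ell/\log n)$ words for an $\ell$-cell boundary. But this compression saves at most a factor of $\Theta(\log n)$ (the word size) over storing the boundary explicitly---the boundary still carries $\Omega(\ell)$ bits that must either be written or held in \fastmem. Concretely, a macro-block of side $s=k\memsize$ has $k^2$ interior sub-block boundaries of $\memsize$ cells each, so writing them even in compressed form costs $\Omega(k^2\memsize/\log n)$ words per macro-block, which exceeds the $O(k\memsize)$ budget of the output boundary as soon as $k>\log n$; and caching one full macro-block row of compressed boundaries in \fastmem{} requires $k\memsize/\log n\le \memsize$, again forcing $k=O(\log n)$. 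Your stated constraint that fitting a sketched $s$-cell row into $\memsize$ words yields $k\le\sqrt{\memsize}$ does not follow from your encoding. Since $k_Q=\min(\wcost^{1/3},\sqrt{\memsize})$ can be polynomially large, far exceeding $\log n$, your scheme caps out at roughly the $\Theta(\log\wcost)$-type improvement of the naive recomputation approach sketched in the paper's overview, not at the theorem's bound.

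The paper's sketch is a different object: it records, for every $\memsize'$-th row (``superrow''), a short \emph{segment} of at most $\memsize'$ consecutive cells through which the shortest paths to a given output pass, exploiting the fact that shortest paths in the grid do not cross. The total sketch data per $h\memsize'\times k\memsize'$ rectangle is $O(h^2k)$ words---independent of $\memsize$, because it scales with the number of segments (Lemma~\ref{lem:segments}) rather than with the number of boundary cells---and interior distances are never stored at all; they are recomputed by sweeping a column through each slab guided by the sketch (Lemma~\ref{lem:usesketch}). It is this decoupling of the stored information from the number of cells that lets $k$ grow to $\min(\wcost^{1/3},\sqrt{\memsize})$ before reads, in-\fastmem{} operations, or sketch writes dominate. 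To repair your argument you would need to replace the difference encoding with a summary whose size is sublinear in the number of boundary cells, which is exactly what the non-crossing-paths segment decomposition provides.
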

\noindent To understand these bounds, our algorithm beats the \iocost{} of the
standard tiling algorithm by a $k_Q$ factor.  And if
$\wcost \geq \memsize$, our algorithm (using different tuning
parameters) beats the \work{} of the standard tiling algorithm by a
$k_T$ factor.

\subparagraph*{Overview} The dynamic programs for LCS and ED correspond
to computing the shortest path through an $m \times n$ grid with
diagonal edges, where $m$ and $n$ are the string lengths.  We focus
here on computing the length of the shortest path, but it is possible
to output the path as well with the same asymptotic complexity
(see Appendix~\ref{sec:LCS-app}).  Without loss of generality, we
assume that $m \leq n$, so the grid is at least as wide as it is tall.
For LCS, all horizontal and vertical edges have weight 0; the diagonal
edges have weight $-1$ if the corresponding characters in the strings
match, and weight $\infty$ otherwise.  For ED, horizontal and vertical
edges have weight $1$, and diagonal edges have weights either $0$ or
$1$ depending on whether the characters match. Our algorithm
is not sensitive to the particular weights of the edges, and
thus it applies to both problems and their generalizations.

Note that the $m \times n$ grid is not built explicitly since building
and storing the graph would take $\Theta(mn)$ writes if $mn \gg
\memsize$.  To get any improvement, it is important that subgrids
reuse the same space.  The weights of each edge can be inferred by
reading the appropriate characters in each input string.

Our algorithm partitions the implicit grid into size-$(h\memsize'
\times k\memsize')$, where $h$ and $k$ are parameters of the algorithm
to be set later, and $\memsize' = \memsize/c$ for large enough
constant $c>1$ to give sufficient working space in \fastmem.  When
string lengths $m$ and $n \geq m$ are both ``large'', we use $h=k$ and
thus usually work with $k\memsize' \times k\memsize'$ square subgrids.
If the smaller string length $m$ is small enough, we instead use
parameters $h < k$.  To simplify the description of the algorithm, we
assume without loss of generality that $m$ and $n$ are divisible by
$h\memsize'$ and $k\memsize'$, respectively, and that $\memsize$ is
divisible by $c$.

Our algorithm operates on one $h\memsize' \times k\memsize'$ rectangle
at a time, where the edges are directed right and down. The
shortest-path distances to all nodes along the bottom and right
boundary of each rectangle are explicitly written out, but all other
intermediate computations are discarded.  We label the vertices
$u_{i,j}$ for $1 \leq i \leq h\memsize'$ and $1\leq j \leq k\memsize'$
according to their row and column in the square, respectively,
starting from the top-left corner.  We call the vertices immediately
above or to the left of the square the \emph{input nodes}.  The input
nodes are all outputs for some previously computed rectangle.  We call
the vertices $u_{h\memsize',j}$ along the bottom boundary and
$u_{i,k\memsize'}$ along the right boundary the \emph{output nodes}.

The goal is to reduce the number of writes, thereby decreasing the
overall cost of computing the output nodes, which we do by sacrificing
reads and \work{}.  It is not hard to see that recomputing internal
nodes enables us to reduce the number of writes.  Consider, for
example, the following simple approach assuming $M=\Theta(1)$: For
each output node of a $k\times k$ square, try all possible paths
through the square, keeping track of the best distance seen so far;
perform a write at the end to output the best value.\footnote{This
  approach requires constant \fastmem{} to keep the best distance, the
  current distance, and working space for computing the current
  distance.  We also need bits proportional to the path length to
  enumerate paths.}  Each output node tries $2^{\Theta(k)}$ paths, but
only a $\Theta(1/k)$-fraction of nodes are output nodes.  Setting
$k=\Theta(\lg \wcost)$ reduces the number of writes by a
$\Theta(\lg\wcost)$-factor at the cost of $\wcost^{O(1)}$ reads.  This
same approach can be extended to larger $M$, giving the same
$\lg \wcost$ improvement, by computing ``bands'' of nearby paths
simultaneously.  But our main algorithm, which we discuss next, is
much better as $\memsize$ gets larger (see
Theorem~\ref{thm:LCSbound}).

\subparagraph*{Path sketch} The key feature of the grid leveraged by our
algorithm is that shortest paths do not cross, which enables us to
avoid the exponential recomputation of the simple approach.  The
noncrossing property has been exploited previously for building
shortest-path data structures on the grid (e.g.,~\cite{Schmidt98}) and
more generally planar graphs
(e.g.,~\cite{FakcharoenpholRa01,KleinMoOr10}).  These previous
approaches do not consider the cost of writing to \slowmem{}, and they
build data structures that would be too large for our use.  Our
algorithm leverages the available \fastmem{} to compute bands of
nearby paths simultaneously.  We capture both the noncrossing and band
ideas through what we call a path sketch, which we define as follows.
The path sketch enables us to cheaply recompute the shortest paths to
nodes.

We call every $\memsize'$-th row in the square a \emph{superrow},
meaning there are $h$ superrows in the square.  The algorithm
partitions the $i$-th superrow into \emph{segments} $\langle
i,\ell,r\rangle$ of consecutive elements $u_{i\memsize',\ell},
u_{i\memsize',\ell+1},\ldots,u_{i\memsize',r}$.  The main restriction
on segments is that $r < \ell + \memsize'$, i.e., each segment
consists of at most $\memsize'$ consecutive elements in the superrow.
Note that the segment boundaries are determined by the algorithm and
are input dependent.

A \emph{path sketch} is a sequence of segments $\langle
s,\ell_s,r_s\rangle,\langle s+1, \ell_{s+1}, r_{s+1}\rangle, \langle
s+2,\ell_{s+2},r_{s+2}\rangle, \ldots,$ $\langle i,\ell_i,r_i\rangle$,
summarizing the shortest paths to the segment.  Specifically, this
sketch means that for each vertex in the last segment, there is a
shortest path to that vertex that goes through a vertex in each of the
segments in the sketch.  If the sketch starts at superrow $1$, then
the path originates from a node above the first superrow (i.e., the top
boundary or the topmost $\memsize'$ nodes of the left boundary).  If
the sketch starts with superrow $s>1$, then the path originates at one
of the $\memsize'$ nodes on the left boundary between superrows $s-1$
and $s$.  Since paths cannot go left, the path sketch also satisfies
$\ell_s \leq \ell_{s+1} \leq \cdots \leq \ell_i$.

\subparagraph*{Evaluating a path sketch}
Given a path sketch, we refer to the process of determining the
shortest-path distances to all nodes in the final segment
$\langle i,\ell_i,r_i\rangle$ as \emph{evaluating the path sketch} or
\emph{evaluating the segment}, with the distances in \fastmem{} when
the process completes.  Note that we have not yet described how to
build the path sketch, as the building process uses evaluation as a
subroutine.

\begin{figure}
\centering
\includegraphics[width=.55\textwidth]{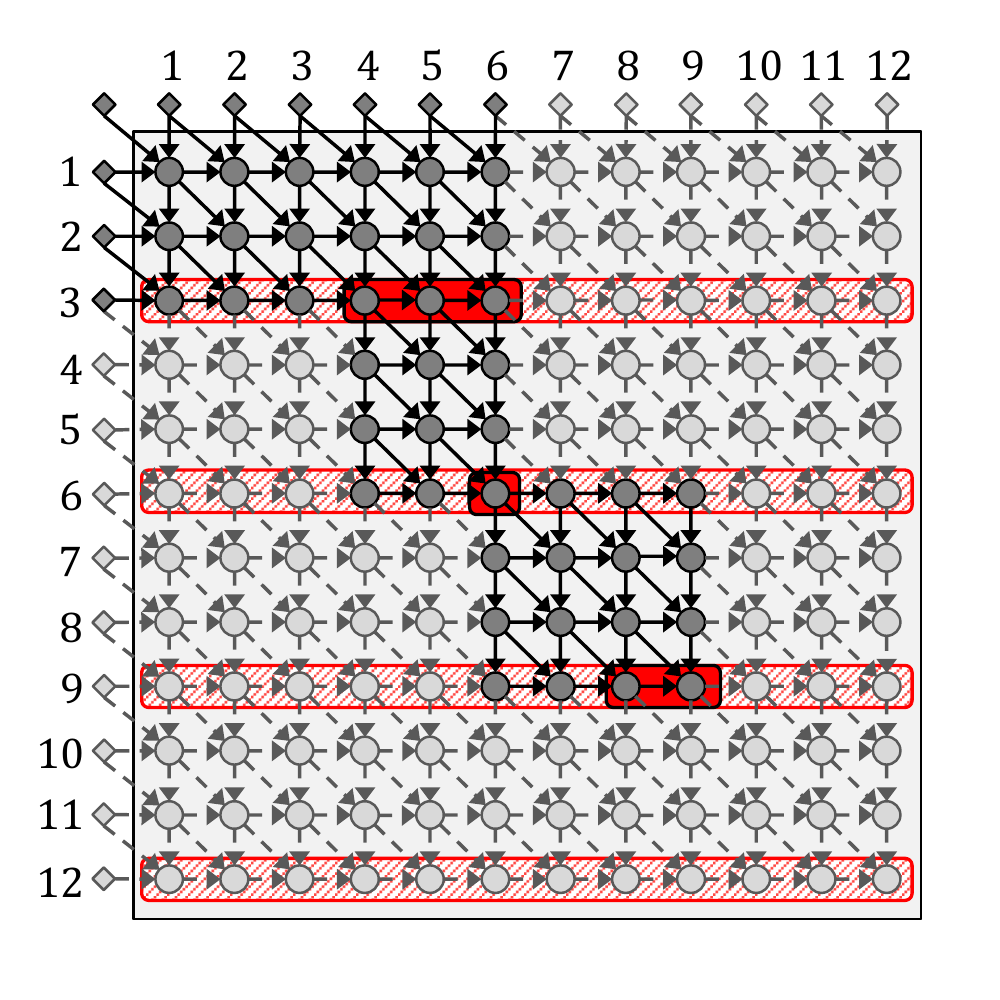}\vspace{-2em}
\caption{\small Example square grid and path sketch for
  $\memsize' = 3$ and $h=k=4$.  The circles are nodes in the square.
  The diamonds are input nodes (outputs of adjacent squares), omitting
  irrelevant edges.  The red slashes are the 4
  superrows, and the solid red are the sketch
  segments.}\label{fig:sketch}
\vspace{-0.1in}
\end{figure}

The main idea of evaluating the sketch is captured by
Figure~\ref{fig:sketch} for the example sketch
$\langle 1,4,6\rangle, \langle 2, 6, 6\rangle, \langle 3,8,9\rangle$.
The sketch tells us that shortest paths to $u_{9,8}$ and $u_{9,9}$
pass through one of $u_{3,4},u_{3,5},u_{3,6}$ and the node $u_{6,6}$.
Thus, to compute the distances to $u_{9,8}$ and $u_{9,9}$, we need
only consider paths through the darker nodes and solid edges---the
lighter nodes and dashed edges are not recomputed during evaluation.

The algorithm works as follows.  First compute the shortest-path
distances to the first segment in the sketch.  To do so, horizontally
sweep a height-$(\memsize'+1)$ column across the
$(\memsize'+1) \times k\memsize'$ slab raising above the $s$-th
superrow, keeping two columns in \fastmem{} at a time. Also keep the
newly computed distances to the first segment in \fastmem, and stop
the sweep at the right edge of the segment.  More generally, given the
distances to a segment in \fastmem, we can compute the values for the
next segment in the same manner by sweeping a column through the slab.
This algorithm yields the following performance.

\begin{lemma}\label{lem:usesketch}
  Given a path sketch $\langle s,\ell_s,r_s\rangle,\ldots,\langle
  i,\ell_i,r_i\rangle$ in an $h\memsize' \times k\memsize'$ grid with
  distances to all input nodes computed, our algorithm
  correctly computes the shortest-path distances to all nodes in the
  segment $\langle i,\ell_i,r_i\rangle$.  Assuming $k \geq h$ and
  \fastmem{} size $\memsize \geq 5\memsize' + \Theta(1)$, the
  algorithm requires $O(k\memsize^2)$ operations in \fastmem,
  $O(k\memsize)$ reads, and $0$ writes.
\end{lemma}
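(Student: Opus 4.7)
The plan is to prove the lemma in three stages: correctness by induction on the slabs swept, the \fastmem{} budget by direct accounting, and the operation and read counts via a telescoping bound on total sweep width.

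For correctness, I would maintain the invariant that after processing the slab between superrows $t$ and $t+1$ (for $t=s,\ldots,i-1$), \fastmem{} holds, for each vertex of segment $\langle t+1,\ell_{t+1},r_{t+1}\rangle$, the length of the shortest path from a sketch-permitted input node that also passes through every earlier segment in the sketch. The base case is the slab above superrow $s$: a standard column-by-column sweep of the $(\memsize'+1)$-tall slab---where each new column is derived top-to-bottom from its left neighbor and the edge weights inferred from the two input strings---correctly solves that slab-restricted shortest-path DP from the permitted boundary inputs. The inductive step is identical: the top row of the next slab is initialized with the previous-segment distances already in \fastmem{} (and $\infty$ elsewhere), and the sweep's left edge is treated as $\infty$ since no permitted path enters the slab from further left. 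After the final sweep, the defining property of a valid path sketch---that every vertex of $\langle i,\ell_i,r_i\rangle$ has some true shortest path through every listed segment---implies that the restricted distances held in \fastmem{} equal the true shortest-path distances.

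For the space bound I would just tally live data during a sweep: two adjacent sweep columns of height $\memsize'+1$ (at most $2\memsize'+2$ words); the $\leq \memsize'$ previous-segment distances; a cached copy of the $\memsize'$ characters of the vertical input string pertinent to the current slab (loaded once per slab so edge weights need not be re-read per column); and $O(1)$ scratch. This totals at most $5\memsize'+\Theta(1)\leq \memsize$.

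The main estimate is the total horizontal sweep width. The first slab is swept from column $1$ to $r_s$, width $r_s$. Each subsequent slab between superrows $t$ and $t+1$ need only be swept from column $\ell_t$ (to the left of which the relevant top-row data is $\infty$) to $r_{t+1}$, width $r_{t+1}-\ell_t+1$. Summing,
\[
r_s + \sum_{t=s}^{i-1}\bigl(r_{t+1}-\ell_t+1\bigr) = r_s + (r_i-\ell_s) + \sum_{t=s+1}^{i-1}(r_t-\ell_t) + (i-s),
\]
and using $r_t-\ell_t<\memsize'$, $r_s,r_i\leq k\memsize'$, and $i-s\leq h\leq k$, this is $O(k\memsize')$. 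Each swept column performs $O(\memsize')$ \fastmem{} operations (one $\min$ of at most three candidates per cell), giving $O(k\memsize^2)$ total \fastmem{} work. The only \slowmem{} reads are the first slab's boundary input distances ($O(k\memsize')$), plus one horizontal-string character per swept column and $O(\memsize')$ vertical-string characters per slab (jointly $O(k\memsize'+h\memsize')=O(k\memsize')$). Summed, this is $O(k\memsize)$. No \slowmem{} writes occur because every path-sketch segment's distances reside entirely in \fastmem{} by construction.

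I expect the main obstacle to be the telescoping width bound; once established it controls both the operation and read counts, while correctness and the space check follow directly from the definition of a path sketch and the right/down/diagonal orientation of the grid.
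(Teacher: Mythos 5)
Your proposal is correct and follows essentially the same route as the paper's proof: correctness from the definition of the sketch via the slab-by-slab sweep, a direct tally of the $5\memsize'+\Theta(1)$ live words in \fastmem{}, and the key bound of $O(k\memsize')$ total column iterations (your explicit telescoping sum is just a more detailed writing of the paper's observation that monotone left endpoints mean at most $\memsize'$ repeated columns per superrow, giving $O(k\memsize'+h\memsize')$). The read accounting (vertical string cached once per slab, horizontal characters per column iteration) also matches the paper's.
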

\begin{proof}
  Correctness follows from the definition of the path sketch: the
  sweep performed by the algorithm considers all possible paths that
  pass through these segments.

  The algorithm requires space in \fastmem{} to store two columns in
  the current slab, the previous segment in the sketch, and the next
  segment in the sketch, and the two segment boundaries themselves,
  totaling $4\memsize' + \Theta(1)$ \fastmem.  Due to the
  monotonically increasing left endpoints of each segment, the
  horizontal sweep repeats at most $\memsize'$ columns per supperrow,
  so the total number of column iterations is
  $O(k\memsize'+h\memsize') = O(k\memsize')$.  Multiplying by
  $\memsize'$ gives the number of nodes computed.

  The main contributor to reads is the input strings themselves to
  infer the structure/weights of the grid.  With $\memsize'$ additional
  \fastmem{}, we can store the ``vertical'' portion of the input
  string used while computing each slab, and thus the vertical string
  is read only once with $O(h\memsize')=O(k\memsize')$ reads.  The
  ``horizontal'' input characters can be read with each of the
  $O(k\memsize')$ column-sweep iterations.  An additional $k$ reads
  suffice to read the sketch itself, which is a lower-order term.
\end{proof}

\subparagraph*{Building the path sketch} The main algorithm on each
rectangle involves building the set of sketches to segments in the
bottom superrow.  At some point during the sketch-building process,
the distances to each output node is computed, at which point it can
be written out.  The main idea of the algorithm is a sketch-extension
subroutine: given segments in the $i$-th superrow and their sketches,
extend the sketches to produce segments in the $(i+1)$-th superrow
along with their sketches.

Our algorithm builds up an ordered list of consecutive path sketches,
one superrow at a time.  The first superrow is partitioned into $k$
segments, each containing exactly $\memsize'$ consecutive nodes.  The
list of sketches is initialized to these segments.

Given a list of sketches to the $i$-th superrow, our algorithm extends
the list of sketches to the $(i+1)$-th superrow as follows.  The
algorithm sweeps a height-$(\memsize'+1)$ column across the
$(\memsize'+1) \times k\memsize'$ slab between these superrows
(inclusive).  The sweep begins at the left end of the slab, reading
the input values from the left boundary, and continuing across the
entire width of the slab.  In \fastmem, we evaluate the first segment
of the $i$-th superrow (using the algorithm from
Lemma~\ref{lem:usesketch}).  Whenever the sweep crosses a segment
boundary in the $i$-th superrow, again evaluate the next segment in the
$i$-th superrow.  For each node in the slab, the sweep calculates both
the shortest-path distance and a pointer to the segment in the
previous superrow from whence the shortest path originates (or a null
pointer if it originates from the left boundary).  When the
originating segment of the bottom node (the node in the $(i+1)$-th
superrow) changes, the algorithm creates a new segment for the
$(i+1)$-th superrow and appends it to the sketch of the originating
segment.  If the segment in the current segment in the $(i+1)$-th
superrow grows past $\memsize'$ elements, a new segment is created
instead and the current path sketch is copied and spliced into the
list of sketches.
Any sketch that is not extended through this
process is no longer relevant and may be spliced out of the list of
sketches.  When the sweep reaches a node on the output boundary (right
edge or bottom edge of the square), the distance to that node is
written out.

\begin{lemma}\label{lem:segments}
  The sketching algorithm partitions the $i$-th superrow into at most
  $ik$ segments.
\end{lemma}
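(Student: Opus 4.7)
My plan is to prove the bound by induction on the superrow index $i$. For the base case $i=1$, the initialization step explicitly partitions superrow $1$ into exactly $k$ equal segments of $\memsize'$ consecutive nodes, matching the claimed bound $1\cdot k=k$. For the inductive step, assuming superrow $i$ has $s_i \leq ik$ segments, I would count the segments produced in superrow $i+1$ by the sketch-extension sweep.

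Reading off the algorithm, a new segment in superrow $i+1$ is created only when one of two events occurs as the column-sweep moves left-to-right: (a) the originating-segment pointer (the segment of superrow $i$ from which the current bottom-row node's shortest path enters) differs from that of the previous column, or (b) the current segment reaches the maximum allowed size $\memsize'$ and is split. The structural fact I would invoke is that the originating-segment index is monotonically non-decreasing in the column index. This follows from the classical noncrossing property of shortest paths in the grid: two optimal paths sharing the same bottom row cannot cross strictly between superrows $i$ and $i+1$ without violating optimality, and so the predecessor in superrow $i$ can only shift to the right as the bottom column advances. Fixing a canonical tie-breaking rule (e.g.\ preferring the leftmost possible predecessor) makes the monotonicity concrete.

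Given monotonicity, the number of type-(a) transitions along the sweep is at most $s_i - 1 \leq ik - 1$, since each distinct originating segment is entered at most once. The number of type-(b) splits is at most $k$: every such segment has exactly $\memsize'$ elements, and the superrow has total width $k\memsize'$, so the sum of their sizes is bounded by $k\memsize'$. Adding the single initial segment at the left end of the sweep yields at most $1 + (ik-1) + k = (i+1)k$ segments in superrow $i+1$, completing the induction.

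The main obstacle I anticipate is making the monotonicity claim fully rigorous under the algorithm's implicit tie-breaking rule, since with ties one could in principle choose originating segments that oscillate. I would discharge this either by fixing a deterministic left-preferring tie-break and arguing by an exchange/uncrossing argument (if two chosen paths crossed, swapping their tails would yield equally short paths that agree with the leftward-preference ordering, contradicting the canonical choice), or by charging any non-monotone pointer change to a strict decrease in some lexicographic potential, which bounds their number by the same $s_i - 1$. Once monotonicity is settled, the counting argument above is entirely routine.
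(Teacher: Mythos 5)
Your proof is correct and follows essentially the same route as the paper's: induction on superrows, with new segments in superrow $i+1$ charged either to a change of originating segment (at most $ik$ by the inductive hypothesis) or to an overflow past $\memsize'$ elements (at most $k$ by the width bound). The only difference is that you make explicit the monotonicity of the originating-segment pointer (via noncrossing paths and a tie-breaking rule), which the paper leaves implicit; that is a reasonable point of care but does not change the argument.
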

\begin{proof}
  The proof is by induction over superrows.  As a base case, the first
  superrow consists of exactly $k$ segments.  For the inductive step,
  there are two cases in which a new segment is started in the
  $(i+1)$-th superrow.  The first case is that the originating segment
  changes, which can occur at most $ik$ times by inductive assumption.
  The second case is that the current segment grows too large, which
  can occur at most $k$ times.  We thus have at most $(i+1)k$ segments
  in the $(i+1)$-th superrow.
\end{proof}

\begin{lemma}\label{lem:buildsketch}
  Suppose $h \leq k$ and \fastmem{} $\memsize \geq
  11\memsize'+\Theta(1)$, and consider an $h\memsize' \times
  k\memsize'$ grid with distances to input nodes already computed.
  Then the sketch building algorithm correctly computes the distances
  to all output boundary nodes using $O((hk)^2\memsize^2)$ operations
  in \fastmem, $O((hk)^2\memsize)$ reads from \slowmem, and
  $O(h^2k+X)$ writes to \slowmem, where $X=O(k\memsize)$ is the number
  of boundary nodes written out.
\end{lemma}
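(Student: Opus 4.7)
The plan is to verify correctness, then count operations and reads, then writes, and finally the fast-memory requirement, leaning on Lemma~\ref{lem:segments} and Lemma~\ref{lem:usesketch} throughout.

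For correctness I would induct on the superrow index, maintaining the invariant that after processing superrow $i$ each stored sketch records, for every earlier superrow, a segment through which some shortest path from the input boundary to the sketch's final segment passes. The base case of superrow $1$ is immediate since the $k$ initial segments cover every possible originating input. For the inductive step, the sweep between superrows $i$ and $i+1$ invokes Lemma~\ref{lem:usesketch} to bring exact distances for each segment of superrow $i$ into \fastmem, then propagates distances and originating-segment pointers across the slab; the noncrossing property of shortest grid paths forces the originating pointer to be monotone, so starting a new segment exactly when this pointer changes, or when the current segment would exceed $\memsize'$ elements, extends each relevant sketch while preserving the invariant. Output-boundary distances are written as the sweep passes over the rightmost column and as segments on the bottom superrow are evaluated.

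For operations and reads I would charge each extension step. By Lemma~\ref{lem:segments} superrow $i$ contains at most $ik$ segments, and the outer sweep re-evaluates each one exactly once via Lemma~\ref{lem:usesketch} at $O(k\memsize^2)$ operations and $O(k\memsize)$ reads. Hence building superrow $i+1$ costs $O(ik^2\memsize^2)$ operations and $O(ik^2\memsize)$ reads, with the outer sweep itself and the $O(i)$ reads per sketch absorbed as lower-order terms. Summing over $i=1,\ldots,h$ yields $O((hk)^2\memsize^2)$ operations and $O((hk)^2\memsize)$ reads.

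For writes I would split the new-segment events during the construction of superrow $i+1$ into two types: (a) an originating-segment pointer change, which appends a single $O(1)$-size record to an existing sketch, and (b) overflow of the current segment past $\memsize'$ elements, which copies the current sketch (length $O(i+1)$) and splices it into the list. Type (a) events total at most $(i+1)k$ across superrow $i+1$ by Lemma~\ref{lem:segments}, contributing $O((i+1)k)$ writes. Type (b) events occur at most $k$ times per superrow, because each consumes $\memsize'$ fresh columns in a width-$k\memsize'$ slab, contributing $O(k(i+1))$ writes. Summed over all $h$ superrows this totals $O(h^2k)$ sketch-related writes; adding the $X=O(k\memsize)$ boundary writes gives $O(h^2k+X)$. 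For the fast-memory bound I would tally simultaneous occupants: two sweep columns of height $\memsize'+1$, the nested Lemma~\ref{lem:usesketch} invocation needing $5\memsize'+\Theta(1)$, one $\memsize'$-buffer each for the vertical-string slab and for the currently growing segment in superrow $i+1$, and $\Theta(1)$ bookkeeping---all within $11\memsize'+\Theta(1)$. The main obstacle I expect is the write accounting in type (b): naively charging $\Theta(i+1)$ writes to every pointer-change plus overflow gives a loose $O(h^3 k)$, and pinning the per-superrow overflow count to $k$ via the column-consumption argument, independent of $i$, is what pulls the total down to $O(h^2 k)$.
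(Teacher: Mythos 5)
Your proposal is correct and follows essentially the same route as the paper's proof: bound the number of segments per superrow via Lemma~\ref{lem:segments}, charge each segment evaluation to Lemma~\ref{lem:usesketch}, sum over the $h$ slabs, and account separately for the $O(1)$-write sketch extensions and the at-most-$k$ sketch copies per superrow (each of length $O(h)$) to get $O(h^2k)$ sketch writes. Your explicit observation that the overflow count per superrow is $k$ independent of $i$ --- avoiding the loose $O(h^3k)$ --- is exactly the point implicit in the paper's ``copy at most $k$ sketches'' step, and your \fastmem{} tally, though itemized slightly differently (the paper also reserves space for pointers in the sweep columns and for one previous-row segment), stays within the same $11\memsize'+\Theta(1)$ budget.
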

\begin{proof}
  Consider the cost of computing each slab, ignoring the writes to the
  output nodes.  We reserve $5\memsize'+\Theta(1)$ \fastmem{} for the
  process of evaluating segments in the previous superrow.  To perform
  the sweep in the current slab, we reserve $\memsize'$ \fastmem{} to
  store one segment in the previous row, $\memsize'$ \fastmem{} to
  store characters in the ``vertical'' input string, $4(\memsize'+1)$
  \fastmem{} to store two columns (each with distances and pointers)
  for the sweep, and an additional $\Theta(1)$ \fastmem{} to keep,
  e.g., the current segment boundaries.  Since there are at most $hk$
  segments in the previous superrow (Lemma~\ref{lem:segments}), the
  algorithm evaluates at most $hk$ segments; applying
  Lemma~\ref{lem:usesketch}, the cost is $O(hk^2\memsize^2)$
  operations, $O(hk^2\memsize)$ reads, and $0$ writes.  There are an
  additional $O(k\memsize^2)$ operations to sweep through the
  $k\memsize^2$ nodes in the slab, plus $O(k\memsize)$ reads to scan
  the ``horizontal'' input string.  Finally, there are $O(hk)$ writes
  to extend existing sketches and $O(hk)$ writes to copy at most $k$
  sketches.

  Summing across all $h$ slabs and accounting for the output nodes, we
  get $O((hk)^2\memsize^2 + hk\memsize^2)$ operations,
  $O((hk)^2\memsize+hk\memsize)$ reads, and $O(h^2k+X)$ writes.
  Removing the lower-order terms gives the lemma.
\end{proof}

Combining across all rectangles in the grid, we get the following
corollary.

\begin{corollary}\label{cor:LCScosts}
  Let $m\leq n$ be the length of the two input strings, with $m\geq
  \memsize$.  Suppose $h=O(m/\memsize)$ and $k=O(n/\memsize)$ with
  $h\leq k$.  Then it is possible to compute the LCS or edit distance
  of the strings with $O(mnhk)$ operations in \fastmem,
  $O(mnhk/\memsize)$ reads to \slowmem, and $O(mnh/\memsize^2 +
  mn/(h\memsize))$ writes to \slowmem.
\end{corollary}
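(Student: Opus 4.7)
The plan is to derive this by tiling the implicit $m\times n$ grid into $h\memsize'\times k\memsize'$ rectangles and invoking Lemma~\ref{lem:buildsketch} on each one. First I would fix a processing order for the rectangles that respects the dependency structure: since all edges in the grid point right and down, processing the rectangles in row-major order (top-to-bottom, left-to-right within each block row) guarantees that whenever a rectangle is processed, its top boundary and left boundary (i.e., its input nodes) have already been written to \slowmem{} as output boundaries of previously processed rectangles. For the very first rectangle, the input boundaries are trivially initialized from the boundary conditions of the dynamic program. So each rectangle satisfies the hypothesis of Lemma~\ref{lem:buildsketch}.

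Next I would count rectangles. Since $\memsize' = \Theta(\memsize)$, the total number of rectangles is $(m/(h\memsize'))\cdot(n/(k\memsize')) = \Theta(mn/(hk\memsize^2))$. The assumptions $h = O(m/\memsize)$ and $k = O(n/\memsize)$ together with $m\geq\memsize$ ensure this count is at least $1$ along each dimension and that the partition is well-defined. The assumption $h\leq k$ carries over directly to each rectangle so that Lemma~\ref{lem:buildsketch} applies as stated, and the \fastmem{} requirement $\memsize \geq 11\memsize' + \Theta(1)$ is ensured by choosing $c$ in $\memsize' = \memsize/c$ large enough.

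Finally I would sum the per-rectangle costs from Lemma~\ref{lem:buildsketch}. Multiplying the number of rectangles $\Theta(mn/(hk\memsize^2))$ by the per-rectangle operation count $O((hk)^2\memsize^2)$ gives $O(mnhk)$ operations in \fastmem; multiplying by the per-rectangle read count $O((hk)^2\memsize)$ gives $O(mnhk/\memsize)$ reads; and multiplying by the per-rectangle write count $O(h^2k+X)$ with $X=O(k\memsize)$ gives $O(mn h/\memsize^2) + O(mn/(h\memsize))$ writes, matching the claimed bound. The $X$ term aggregates into the $mn/(h\memsize)$ term because summing $k\memsize$ boundary writes over $mn/(hk\memsize^2)$ rectangles yields $mn/(h\memsize)$, which correctly counts one write per boundary node of every rectangle (and hence covers every node in the grid that needs to be written out to \slowmem).

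There is no serious obstacle here: the only thing to be careful about is the processing order, which I handle by the row-major argument above, and the bookkeeping that the $X$ per-rectangle writes sum exactly to the ``one write per rectangle-boundary node'' total. Both fall out immediately from the partition.
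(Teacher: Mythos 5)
Your proposal is correct and follows essentially the same route as the paper: the paper's proof is just the two-sentence computation of multiplying the $\Theta(mn/(hk\memsize^2))$ rectangle count by the per-rectangle costs from Lemma~\ref{lem:buildsketch}, and your arithmetic for all three cost terms matches. The extra details you supply (row-major processing order so that input boundaries are available, and verifying the \fastmem{} hypothesis via the choice of $c$) are correct elaborations of what the paper leaves implicit.
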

\begin{proof}
  There are $\Theta(mn/(hk\memsize^2))$ size-$(h\memsize/11) \times
  (k\memsize/11)$ subgrids.  Multiplying by the cost of each grid
  (Lemma~\ref{lem:buildsketch}) gives
  the bound.
\end{proof}

Setting $h=k=1$ gives the standard $\memsize\times \memsize$ tiling
with $O(nm)$ \work{} and $O(mn\wcost/\memsize)$ \iocost{}.  As the
size of squares increase, the fraction of output nodes and hence
writes decreases, at the cost of more overhead for operations in
\fastmem{} and reads from \slowmem.  Assuming both $n$ and $m$ are
large enough to do so, plugging in $h=k=\max\{1,k_T\}$ or $h=k=k_Q$
with a few steps of algebra to eliminate terms yields
Theorem~\ref{thm:LCSbound}.

\section{Single-Source Shortest Paths}
\label{sec:dijkstra}

The single-source shortest-paths (SSSP) problem takes a directed
weighted graph $G=(V,E)$ and a source vertex $s\in V$, and outputs the
shortest distances $d(s,v)$ from $s$ to every other vertex in $v\in
V$.  For graphs with non-negative edge weights, the most efficient
algorithm is Dijkstra's algorithm~\cite{dijkstra1959}.

In this section we will study (variants of) Dijkstra's algorithm in
the asymmetric setting. We describe and analyze three versions (two
classical and one new variant) of Dijkstra's algorithm, and the best
version can be chosen based on the values of $\memsize$, $\wcost$, the
number of vertices $n=|V|$, and the number of edges $m=|E|$.

\begin{theorem}
\label{thm:Dijkstra}
The SSSP problem on a graph $G=(V,E)$ with non-negative edge weights
can be solved with $\displaystyle Q(n,m)=O\left(\min\left(n
    \left(\wcost + \frac{m}{\memsize}\right), \wcost(m+n\log
    n), m(\wcost+\log n)\right)\right)$ and $\W(n,m)=O(Q(n,m) + n\log n)$, both in expectation, on the \ourmodel.
\end{theorem}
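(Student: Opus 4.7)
The plan is to prove each of the three $Q$-bounds in the minimum separately, by exhibiting a different implementation of Dijkstra's algorithm; the best bound for a given $(n,m,\memsize,\wcost)$ can be selected at run time. In every case the \work{} bound $\W(n,m) = O(Q(n,m) + n\log n)$ follows by adding the cost of priority-queue comparisons and pointer chases that execute in \fastmem{}.

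The two classical variants are essentially direct. For $O(m(\wcost+\log n))$, I would use a balanced binary search tree (e.g., a red-black tree) in \slowmem{} as the priority queue, since by the analysis in Section~\ref{sec:upper-intro} each insert, delete-min, or decrease-key takes $O(\wcost+\log n)$ amortized cost in the \ourmodel, and Dijkstra performs $O(n)$ deletions and at most $m$ decrease-keys. For $O(\wcost(m+n\log n))$, I would use a Fibonacci heap; its amortized $O(1)$ decrease-key and $O(\log n)$ delete-min counts remain valid under the asymmetric accounting, but each write is charged $\wcost$, so $m$ decrease-keys and $n$ delete-mins together contribute $O(\wcost(m+n\log n))$.

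The new bound $O(n\wcost+nm/\memsize)$ requires a phased variant of Dijkstra in which the active priority queue is kept entirely in \fastmem{}. The algorithm proceeds in $O(n/\memsize)$ phases; at the start of phase $i$ it fixes a distance threshold $\tau_i$ expected to settle about $\memsize$ new vertices. It then streams all $m$ edges through \fastmem{} once to collect, from each settled-to-unsettled edge $(u,v)$, a candidate $(v, d(u)+w(u,v))$ whose value is at most $\tau_i$, runs Dijkstra locally on the resulting in-memory priority queue until every vertex of true distance at most $\tau_i$ is settled, and finally writes out the $\Theta(\memsize)$ new distances. This costs $O(m)$ reads and $O(\memsize\wcost)$ writes per phase, summing to the claimed $O(nm/\memsize+n\wcost)$. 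Correctness relies on edge nonnegativity: any shortest path to a vertex of distance at most $\tau_i$ passes only through vertices of distance at most $\tau_i$, so the local Dijkstra is exact.

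The main obstacle is choosing each threshold $\tau_i$ so that neither too few nor too many (more than $\memsize$) vertices are settled: an undershoot wastes the edge scan, and an overshoot overflows the in-memory priority queue. The plan is to pick $\tau_i$ by drawing a small random sample of unsettled vertices, reading their current tentative distances, and taking the appropriate distance quantile; a standard concentration argument then shows that with high probability $\Theta(\memsize)$ vertices are settled per phase, which explains the ``in expectation'' qualifier in the statement. If the estimate fails (detectable when the in-\fastmem{} queue would exceed capacity), the phase can be aborted and retried with a smaller threshold, and the expected amortized cost of retries is absorbed in the asymptotic bound.
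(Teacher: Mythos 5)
Your treatment of the two classical variants matches the paper's: a balanced binary search tree in \slowmem{} gives $O(m(\wcost+\log n))$ and a Fibonacci heap gives $O(\wcost(m+n\log n))$. The high-level shape of your phased algorithm is also the paper's: $O(n/\memsize)$ phases, one $O(m)$ edge scan per phase to seed an in-\fastmem{} priority queue, and writes to \slowmem{} only for the $n$ final distances. But the step where you choose the threshold $\tau_i$ by sampling tentative distances has a genuine gap. Tentative distances are only \emph{upper} bounds on true distances, and at the start of a phase most unsettled vertices have tentative distance $+\infty$ (or a value determined only by their already-settled in-neighbors), so a quantile of sampled tentative distances tells you essentially nothing about how many vertices have \emph{true} distance at most $\tau_i$: that count can be far larger than the count with small tentative distance, because short paths may run through still-unsettled intermediate vertices. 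The ``standard concentration argument'' therefore does not apply, and the abort-and-retry loop has no proven bound on the expected number of retries (by how much do you shrink $\tau_i$, and why does a constant expected number of shrinkages per phase suffice?). There is also an unexamined cost in maintaining readable tentative distances for unsettled vertices across phases without incurring extra writes to \slowmem.

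The paper avoids thresholds and sampling entirely with a deterministic device: insert candidates into the in-\fastmem{} Fibonacci heap freely, and whenever the heap reaches size $2\memsize'$, flatten it, select the $\memsize'$-th smallest key $d_{\max}$ in linear time, rebuild the heap from the elements at most $d_{\max}$, and for the remainder of the phase discard any insertion whose key exceeds $d_{\max}$. This guarantees, with no randomness, that whenever truncation occurs the phase visits at least $\memsize'$ vertices (all of them must be removed by delete-min before the heap empties), so there are at most $\lceil n/\memsize'\rceil$ phases; a phase in which the heap never fills simply finishes the algorithm. The only source of the ``in expectation'' qualifier in the theorem is the \fastmem{} hash table needed to locate a vertex inside the heap for decrease-key --- a detail your proposal omits but would also need. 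Replacing your sampling step with this truncation rule repairs the argument; the rest of your cost accounting then goes through essentially as you wrote it.
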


We start with the classical Dijkstra's algorithm~\cite{dijkstra1959},
which maintains for each vertex $v$, $\delta(v)$, a tentative upper bound on the
distance, initialized to
$+\infty$ (except for $\delta(s)$, which is initialized to $0$).  The
algorithm consists of $n-1$ iterations, and the final distances from
$s$ are stored in $\delta(\cdot)$.  In each iteration, the
algorithm selects the unvisited vertex $u$ with smallest finite
$\delta(u)$, marks it as \emph{visited}, and uses its outgoing edges
to relax (update) all of its neighbors' distances.  A priority queue
is required to efficiently select the unvisited vertex with minimum
distance.  Using a Fibonacci heap~\cite{fredman1987fibonacci}, the
\work{} of the algorithm is $O(m+n\log n)$ in the standard (symmetric)
RAM model.  In the \ourmodel, the costs are $Q=\W=O((m+n\log
n)\wcost)$ since the Fibonacci heap requires asymptotically as many writes
as reads.  Alternatively, using a binary search tree for the priority
queue reduces the number of writes (see Section~\ref{sec:upper-intro})
at the cost of increasing the number of reads, giving $Q=\W=$ $O(m\log
n+\wcost{}m)$.  These bounds are better when $m=o(\wcost{}n)$.  Both
of these variants store the priority queue in \slowmem, requiring at
least one write to \slowmem{} per edge.

We now describe an algorithm, which we refer to as \emph{phased
  Dijkstra}, that fully maintains the priority queue in \fastmem{} and
only requires $O(n)$ writes to \slowmem{}.  The idea is to partition
the computation into phases such that for a parameter $\memsize'$ each
phase needs a priority queue of size at most $2\memsize'$ and visits
at least $\memsize'$ vertices.  By selecting $\memsize' = \memsize/c$
for an appropriate constant $c$, the priority queue fits in \fastmem, and the only
writes to \slowmem{} are the final distances.

Each phase starts and ends with an empty priority queue $P$ and
consists of two parts.  A Fibonacci heap is used for $P$, but is kept
small by discarding the $\memsize'$ largest elements (vertex distances)
whenever $|P| = 2\memsize'$.  To do this $P$ is flattened into an
array, the $\memsize'$-th smallest element $d_\smb{max}$ is found by
selection, and the Fibonacci heap is reconstructed from the elements
no greater than $d_\smb{max}$, all taking linear time.  All further
insertions in a given phase are not added to $P$ if they have a value
greater than $d_\smb{max}$.  The first part of each phase loops over all
edges in the graph and relaxes any that go from a visited to an
unvisited vertex (possibly inserting or decreasing a key in $P$).  The
second part then runs the standard Dijkstra's algorithm, repeatedly
visiting the vertex with minimum distance and relaxing its neighbors
until $P$ is empty.  To implement relax, the algorithm needs to know
whether a vertex is already in $P$, and if so its location in $P$ so that it
can do a decrease-key on it.  It is too costly to store this
information with the vertex in \slowmem{}, but it can be stored in
\fastmem{} using a hash table.

The correctness of this phased Dijkstra's algorithm follows from the fact that
it only ever visits the closest unvisited vertex, as with the standard
Dijkstra's algorithm.

\begin{lemma}
  Phased Dijkstra's has $\displaystyle Q(n,m)=O\left(n\left(\wcost + \frac{m}{\memsize}\right)\right)$ and
  $\W(n,m)=O(Q(n,m) +n\log n)$ both in expectation (for $M \leq n$).
\end{lemma}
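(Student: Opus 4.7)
The plan is to partition the execution into phases, bound the number of phases by $O(n/\memsize)$, and then account for reads, writes, and \fastmem{} \work{} per phase. The key structural point is that truncation always keeps the $\memsize'$ smallest keys and the second part of the phase extracts all of them, so every phase finalizes at least $\memsize'$ vertices, yielding at most $O(n/\memsize')=O(n/\memsize)$ phases overall.

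Given the phase count, I would bound $Q$ in two pieces. The only writes to \slowmem{} are the $n$ finalized distances, contributing $\wcost n$ to $Q$. The reads to \slowmem{} are dominated by the first part of each phase: a scan over all $m$ edges that reads the visited-flag and current tentative distance of each endpoint in order to relax visited-to-unvisited crossings. That is $O(m)$ per phase, hence $O(nm/\memsize)$ total. Relaxations in the second part of each phase, summed across all phases, perform at most $O(m)$ additional reads, since each edge is relaxed from its source only after the source is visited. Thus $Q(n,m) = O(n\wcost + nm/\memsize)$. For $\W$ I would add the in-\fastmem{} operations: Fibonacci-heap insert and decrease-key each cost $O(1)$ amortized and happen at most once per successful relaxation, so their total is $O(nm/\memsize)$; there are exactly $n$ extract-min operations, each at $O(\log\memsize')$ amortized cost; and the $O(\memsize')$ cost of each truncation can be charged to the $\memsize'$ insertions that filled the heap. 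Summing gives $O(nm/\memsize + n\log\memsize)$, which is absorbed into $O(Q + n\log n)$.

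The main obstacles are correctness of truncation and fitting everything in \fastmem{}. For correctness, I would argue that any vertex $v$ discarded during phase $i$ was inserted via some edge $(u,v)$ with $u$ already visited, so $(u,v)$ will be re-examined in the first part of every later phase until $v$ is finalized and will reinsert $v$ with a key no larger than the discarded one; combined with the standard Dijkstra invariant that the current minimum of $P$ is final, this shows that every vertex extracted from $P$ indeed receives its true shortest distance. For the \fastmem{} budget, one chooses the constant $c$ in $\memsize'=\memsize/c$ large enough to hold the Fibonacci heap of size at most $2\memsize'$, a hash table mapping each vertex currently in $P$ to its heap location, the temporary array used during truncation, and $\Theta(1)$ scratch space; the expected $O(1)$ behavior of the hash table is the source of the ``in expectation'' qualifier.
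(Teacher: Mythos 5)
Your proposal is correct and follows essentially the same route as the paper's proof: bound the number of phases by $O(n/\memsize')$ via the at-least-$\memsize'$-visits-per-phase argument, charge $O(m)$ reads to the edge scan of each phase, charge the $n$ writes to the finalized distances, and amortize the heap truncations against insertions, with the hash table supplying the expectation. You supply somewhat more detail on why truncation preserves correctness (the paper only remarks that the algorithm always visits the closest unvisited vertex), but the accounting and bounds are identical.
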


\begin{proof}
During a phase either the size of $P$ will grow to $2\memsize'$ (and
hence delete some entries) or it will finish the algorithm.  If $P$
grows to $2\memsize'$ then at least $\memsize'$ vertices are visited
during the phase since that many need to be deleted with delete-min to
empty $P$.  Therefore the number of phases is at most $\lceil
n/\memsize'\rceil$.  Visiting all edges in the first part of each
phase involves at most $m$ insertions and decrease-keys into $P$, each
taking $O(1)$ amortized time in \fastmem, and $O(1)$ time to read the
edge from \slowmem.  Since compacting $Q$ when it overflows takes
linear time, its cost can be amortized against the insertions that
caused the overflow.  The cost across all phases for the first part is
therefore $Q = W = O(m \lceil n / \memsize' \rceil)$.  For the second
part, every vertex is visited once and every edge relaxed at most once
across all phases.  Visiting a vertex requires a delete-min in
\fastmem{} and a write to \slowmem, while relaxing an edge requires an
insert or decrease-key in \fastmem, and $O(1)$ reads from \slowmem{}.
We therefore have for this second part (across all phases) that $Q =
O(\wcost n + m)$ and $W = O(n(\wcost + \log n) + m)$.  The operations
on $P$ each include an expected $O(1)$ cost for the hash table
operations.  Together this gives our bounds.
\end{proof}
\hide{

The pseudocode of the algorithm is provided in
Algorithm~\ref{algo:wo-dijkstra}.
The algorithm is divided into at most $\lceil n/\memsize'\rceil$
\emph{phases}, where the beginning of a phase is when the algorithm
executes Lines 5--9, looping over all of the edges in the graph, and
the phase lasts until just either right before the algorithm executes
Lines 5--9 again, or all vertices have been visited.
A priority queue is used $P$ to keep up to $\memsize'$ unvisited vertices that
are tentatively closest to the source vertex $s$.  Let $P_{\smb{max}}$
be the largest distance of the vertex in $P$ if $P$ contains
$\memsize'$ vertices, and $+\infty$ otherwise.  Then in Lines 10--20,
we run the standard procedure for Dijkstra's algorithm, except that
only the vertices with distances less than $P_{\smb{max}}$ are
relaxed.  More specifically, the distance of a vertex is updated in
the priority queue (insertion if the vertex is not in $P$, and
decrease-key otherwise) if and only if the distance is less than
$P_{\smb{max}}$.
To insert a vertex when $P$ is full, we have to delete the vertex in $P$ with
largest tentative distance to create space, and this process is explained later.
We consider only vertices
with tentative distance less than $P_{\smb{max}}$ to guarantee that all vertices that have closer tentative distances are guaranteed to be in $P$.
When the priority queue becomes empty, which means that all vertices
with distances less than $P_{\smb{max}}$ are correctly computed, this
phase finishes and a new phase will start.

To implement the priority queue, we use a Fibonacci heap, which allows
insertions and decrease-keys to be done in $O(1)$ \work, and
delete-mins in $O(\log \memsize')$ \work.
However, the standard Fibonacci heap does not support efficiently
finding and deleting the vertex with maximum distance, an operation
that our algorithm requires.
To solve this problem, we modify the Fibonacci heap so that the
deletions are gathered and processed ``lazily''.  The deletions are
delayed until the number of elements in $P$ increases by a constant
factor (say $\memsize'$ deletions are gathered).  Then we flatten $P$
into an array, use quick-selection to find the $\memsize'$-th element,
reconstruct a Fibonacci heap from the first $\memsize'$ elements
and remove the remaining elements.  It is easy to see that this
process takes $O(\memsize')$ \work, and thus the amortized cost for
each deletion is $O(1)$.
In Line~\ref{check}, we need a hash table to track whether a vertex is in $P$ or not.
If so, the pointer of the heap vertex is given, and we can apply decrease-key to that heap vertex.
\hide{\julian{I don't see why a hash table is
  needed to keep the mapping of vertices. Can we just do the
  decrease-key even if the vertex is not one of the $\memsize'$
  smallest elements? This is effective what deleting it and
  reinserting it is doing.}
  \yan{This is because when we do decrease-key of a vertex, we need to first know where the vertex is in $P$. I guess you want to say is that to use insertion but not decrease-key? Otherwise, I don't think we can decrease a key that does not exist in $P$. If we use insertion, then we might keep duplications of a vertex. Our new Fib heap does not support deletion of an arbitrary element. Even we can do it, we still need to know where it is, which requires the hash table.}}

\begin{algorithm}[t]
\caption{Write-optimal Dijkstra's algorithm}
\label{algo:wo-dijkstra}

    \KwIn{A connected weighted graph $G=(V,E)$ and a source vertex $s$}
    \KwOut{The shortest distances $\delta=\{\delta_1,\ldots, \delta_n\}$ from source $s$}
    \vspace{0.5em}

{Priority Queue $P\leftarrow \varnothing$}\\
{Mark vertex $s$ as visited and set $\delta(s)\leftarrow 0$}\\
\While {there exists unvisited vertices} {
\If {$P = \varnothing$} {
    {Scan over all edges in $E$ and store at most $\memsize'$ closest unvisited vertices in $P$} \label{line1}\\
    \If {$|P|=\memsize'$} {
        {Set $P_{\smb{max}}$ as the distance to the farthest vertex in $P$}
    }
        \Else {$P_{\smb{max}}\leftarrow +\infty$\label{line2}}
    }
{$u=P.\mb{deleteMin}()$ \label{line3}}\\
{Set $\delta_u$ as the distance from $s$ to $u$, and mark $u$ as visited \label{line5}}\\
\ForEach {$(u,v,\mb{dis}_{u,v})\in E$} {
    \If {$\delta_u+\mb{dis}_{u,v}<P_{\smb{max}}$} {
        \If {$v\in P$\label{check}} {
            {$P.\mb{decreaseKey}(v,\delta_u+\mb{dis}_{u,v})$}
        }
        \Else {
            {$P.\mb{insert}(v,\delta_u+\mb{dis}_{u,v})$}\\
            \If {$|P| = 2\memsize'$} {
                {Remove $\memsize'$ vertices with largest distances in $P$}\\
                {Set $P_{\smb{max}}$ as the distance to the farthest vertex in $P$\label{line4}}
            }
        }
    }
}
}
\Return $\delta(\cdot)$
\end{algorithm}

Now we consider the costs $Q$ and $W$ for the algorithm.  The number
of phases is at most $\lceil n/\memsize'\rceil$ since at least
$\memsize'$ vertices are visited in each phase (in fact many more than
$\memsize'$ might be visited, possibly finishing the algorithm in one
phase).
In each phase, at most $m$ insertions plus decrease-keys are used to
initialized the priority queue (Lines~\ref{line1}--\ref{line2}), and
at most $m$ edges can relax the distances in one phase
(Lines~\ref{line3}--\ref{line4}). These operations are all in
\fastmem{} and the \work{} for each of operations is constant, so the
\work{} is $O(m)$ per phase. The \iocost{} is $O(m)$ as each edge is read
once on Line 5 and at most once on Line 12.  Each vertex will be
extracted from the delete-min on Line 10 exactly once throughout the
algorithm, so the total \work{} is $O(n\log\memsize')$.
Lastly, there is one write to \slowmem{} per vertex on
Line~\ref{line5} when it is extracted from the delete-min.
Overall, the \iocost{} $Q$ is $O((n/\memsize') m+\wcost{}n)$, and the
\work{} $\W$ is $Q+O(n+n\log n)$.

\begin{lemma}
  Algorithm~\ref{algo:wo-dijkstra} has costs $\displaystyle
  Q=O\left(n\left({m\over\memsize}+\wcost{}\right)\right)$ and
  $\W=Q+O(n\log n)$.
\end{lemma}
}

Compared to the first two versions of Dijkstra's algorithm with
$Q=\W=O(\wcost{}m+\min(\wcost{}n\log n,m\log n))$, the new algorithm
is strictly better when $\wcost \memsize>n$.
More specifically, the new algorithm performs better when
$nm/M<\max\{\wcost{}m,\min(\wcost{}n\log n,m\log n)\}$.
Combining these three algorithms proves Theorem~\ref{thm:Dijkstra},
when the best one is chosen based on the parameters $\memsize$,
$\wcost$, $n$, and $m$.


\section{Minimum Spanning Tree (MST)}
\label{sec:MST}

In this section we discuss several commonly-used algorithms for
computing a minimum spanning tree (MST) on a weighted graph $G=(V,E)$
with $n=|V|$ vertices and $m=|E|$ edges.  Some of them are optimal in
terms of the number of writes ($O(n)$).  Although loading a graph into
\slowmem{} requires $O(m)$ writes, the algorithms are still useful on
applications that compute multiple MSTs based on one input graph.  For
example, it can be useful for computing MSTs on subgraphs, such as
road maps, or when edge weights are time-varying functions and hence
the graph maintains its structure but the MST varies over time.

\myparagraph{Prim's algorithm} All three versions of Dijkstra's
algorithm discussed in Section~\ref{sec:dijkstra} can be adapted to
implement Prim's algorithm~\cite{CLRS}. Thus, the upper bounds of \iocost{} and
\work{} in Theorem~\ref{thm:Dijkstra} also hold for minimum spanning
trees.

\myparagraph{Kruskal's algorithm} The initial sorting phase requires
$Q=\W=O(m\log n+\wcost{}m)$~\cite{BFGGS15}.  The second phase
constructs a MST using union-find without path compression in $O(m\log
n)$ \work{}, and performs $O(n)$ writes (the actual edges of the
MST). Thus, the complexity is dominated by the first phase.  Neither
\iocost{} nor \work{} match our variant of \boruvka's algorithm.

\myparagraph{\boruvka's algorithm} \boruvka's algorithm consists of at
most $\log n$ rounds.  Initially all vertices belong in their own
component, and in each round, the lightest edges that connect each
component to another component are added to the edge set of the MST,
and components are merged using these edges.  This merging can be done
using, for example, depth first search among the components and hence
takes time proportional to the number of remaining components.
However, since edges are between original vertices the algorithm is
required to maintain a mapping from vertices to the component they
belong to.  Shortcutting all vertices on each round to point directly
to their component requires $O(n)$ writes per round and hence up to
$O(n \log n)$ total writes across the rounds.  This is not a
bottleneck in the standard RAM model but is in the asymmetric case.

We now describe a variant of \boruvka's algorithm which is
asymptotically optimal in the number of writes.  It requires only
$O(1)$ \fastmem.  The algorithm proceeds in two phases.  For the first
$\log \log n$ rounds, the algorithm performs no shortcuts (beyond the
merging of components).  Thus it will leave chains of length up to
$\log\log n$ that need to be followed to map each vertex to the
component it belongs to.  Since there are at most $O(m\log\log n)$
queries during the first $\log \log n$ rounds and each only require
reads, the total \work{} for identifying the minimum edges between
components in the first phase is $O(m(\log\log n)^2)$.  After the
first phase all vertices are shortcut to point to their component.  We
refer to these components as the phase-one components.  In the second
phase, on every round, we shortcut the phase-one components to point
directly to the component they belong to.  Since there can only be at
most $n / \log n$ phase-one components, and at most $\log n - \log
\log n$ rounds in phase-two, the total number of reads and writes for
these updates is $O(n)$.  During phase two the mapping from a vertex
to its component takes two steps: one to find its phase-one component
and another to get to the current component.  Therefore the total
\work{} for identifying the minimum edges between components in the
second phase is $O(m \log n)$.

All other work is on the components themselves (i.e. adding the forest
of minimum edges and performing DFS to merge components).  The number
of reads, writes, and other instructions is proportional to the number
of components. There are $n$ components on the first round and
the number decreases by at least a factor of two on each following round.
Therefore the total \work{} on the components is $O(\wcost n)$.
Summing the costs give the following lemma.

\begin{lemma}\label{lemma:boruvka} Our variant of
  \boruvka's algorithm generates a minimum spanning tree on a graph
  with $n$ vertices and $m$ edges with \iocost{} and \work{}
  $Q(n,m)=\W(n,m)=O(m\log n+\wcost{}n)$ on the \ourmodel.
\end{lemma}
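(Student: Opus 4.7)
The plan is to decompose the total cost into three independent contributions and bound each separately: (a) the reads used to look up the current component of each edge endpoint on each round, (b) the reads and writes used to maintain the vertex-to-component mapping via shortcutting, and (c) the intrinsic work on the components themselves (adding MST edges, DFS merging). Summing these should yield $Q(n,m)=W(n,m)=O(m\log n + \wcost n)$.

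First I would analyze Phase~1, the initial $\log\log n$ rounds in which no shortcutting is performed. Since each round of merging extends the chain from a vertex to its representative by at most one hop, after round $i$ chains have length at most $i$. On round $i$ the $2m$ endpoint lookups therefore cost $O(mi)$ reads, and $\sum_{i=1}^{\log\log n} O(mi) = O(m(\log\log n)^2) = O(m\log n)$ reads in total, with no writes to the vertex-to-component map in this phase. The only writes are $O(n)$ writes across the whole phase for adding chosen minimum edges and joining components along them, which fits into (c).

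Next I would handle the single shortcut pass between Phase~1 and Phase~2, which rewrites every vertex to point directly to its phase-one component and costs $O(n)$ reads and writes, i.e., $O(\wcost n)$ ARAM cost. Because each \boruvka{} round at least halves the number of live components, after $\log\log n$ rounds there are at most $n/\log n$ phase-one components. In Phase~2, on each of the at most $\log n - \log\log n$ rounds we shortcut only these phase-one component pointers to their current component; summed over rounds this is $O(n)$ reads and $O(n)$ writes, contributing $O(\wcost n)$. A vertex-to-component lookup in Phase~2 takes two hops (vertex to phase-one component to current component), so the $2m$ edge-endpoint lookups on each of $O(\log n)$ rounds contribute $O(m\log n)$ reads. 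Finally, the intrinsic component work on each round---adding one tree edge per live component and running DFS on the contracted graph to decide merges---is proportional to the current component count, which starts at $n$ and at least halves per round, summing to $O(n)$ operations. Each such operation may include a write, contributing $O(\wcost n)$ cost.

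Adding the contributions, reads total $O(m\log n)$ and writes total $O(n)$ with ARAM cost $O(\wcost n)$, giving $Q(n,m) = W(n,m) = O(m\log n + \wcost n)$. The main obstacle is ensuring that each original-graph vertex pointer is written to only $O(1)$ times in total; this is precisely what the two-phase design buys, since the single full shortcut at the boundary writes each vertex exactly once, and every subsequent shortcut in Phase~2 touches only the $O(n/\log n)$ phase-one component entries, never the $n$ vertex entries again. Once this distinction is made explicit, the remaining accounting reduces to two simple geometric-style sums over rounds.
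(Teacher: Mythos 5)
Your proposal is correct and follows essentially the same route as the paper: no shortcutting for the first $\log\log n$ rounds (giving $O(m(\log\log n)^2)$ reads for lookups), a one-time full shortcut to the at most $n/\log n$ phase-one components, per-round shortcutting of only those components in phase two (totalling $O(n)$ reads and writes), two-hop lookups costing $O(m\log n)$ reads, and geometrically decreasing component work costing $O(\wcost n)$. The only refinement beyond the paper is your per-round chain-length bound of $i$ in phase one and the explicit observation that $(\log\log n)^2 = O(\log n)$, neither of which changes the argument.
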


\begin{theorem}
A minimum spanning tree on a graph $G=(V,E)$ can be computed with
\iocost{} $\displaystyle Q(n,m)=O\left(m \min\left( {n\over\memsize},
\log n\right) + \wcost n\right)$ and \work{} $\W(n,m)=O(Q(n,m) + n\log
n)$ on the \ourmodel.
\end{theorem}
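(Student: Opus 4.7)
The plan is to run whichever of the two algorithms already established in this section achieves the smaller cost on the given instance, and to take the minimum of their bounds. First, the phased variant of Dijkstra's algorithm adapts directly to Prim's algorithm, as noted in the discussion under ``Prim's algorithm'' above, so Theorem~\ref{thm:Dijkstra} (via its phased-Prim instantiation) yields $Q(n,m) = O(nm/\memsize + \wcost n)$ and $\W(n,m) = O(Q(n,m) + n\log n)$ for MST. Second, Lemma~\ref{lemma:boruvka} gives $Q(n,m) = \W(n,m) = O(m\log n + \wcost n)$ for the modified \boruvka{} algorithm. Choosing the better of the two on a per-instance basis (based on whether $n/\memsize$ or $\log n$ is smaller) produces $Q(n,m) = O\!\left(m\min(n/\memsize, \log n) + \wcost n\right)$; the \work{} bound $\W(n,m) = O(Q(n,m) + n\log n)$ then inherits only the extra $n\log n$ from the phased-Prim case, since the \boruvka{} variant has $\W = Q$ already.

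The only step requiring substance beyond citation is verifying that phased Prim is correct and admits the same analysis as phased Dijkstra. I would establish this by the analogous invariant: in each phase, the priority queue $P$ (capped at $2\memsize'$ and compressed to its $\memsize'$ smallest keys whenever it would overflow) contains the lightest crossing edges out of the grown tree, which suffices to identify the next vertex to add; any crossing-edge candidate discarded from $P$ will be reinserted by the full edge rescan that begins the next phase. The phase count is again $\lceil n/\memsize'\rceil$ because each phase either adds $\memsize'$ vertices to the tree or empties $P$ and terminates, and the cost accounting (Fibonacci-heap operations in \fastmem, $O(1)$ reads per edge during each phase's rescan, one write per vertex visit to \slowmem) transfers verbatim from the phased-Dijkstra proof.

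The ``hard part'' is essentially nothing new: both constituent bounds are proved earlier, and the theorem is simply their pointwise minimum, with the shared $\wcost n$ additive term accounting for the $n-1$ writes that record the MST edges (or equivalently the one write per vertex visit). No additional algorithmic ideas are needed; the only bookkeeping point to check is that the extra $n\log n$ \work{} term appears only in the phased-Prim branch and is therefore correctly added once on top of the minimized $Q(n,m)$ in the final \work{} bound.
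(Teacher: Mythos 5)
Your proposal is correct and matches the paper's proof, which likewise obtains the theorem by taking the pointwise minimum of the phased-Prim bound (from Theorem~\ref{thm:Dijkstra}) and the modified \boruvka{} bound (Lemma~\ref{lemma:boruvka}); your extra verification that phased Prim inherits the phased-Dijkstra analysis is detail the paper asserts without elaboration. The only nuance you omit is the paper's parenthetical that the $n/\memsize$ term holds in expectation (owing to the hash table in the phased algorithm), which is worth noting but does not affect the argument.
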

The theorem is a combination of the bounds of Prim's and
\boruvka's algorithms (the $n/M$ term is in expectation).


\subsection*{Acknowledgments}
This research was supported in part by NSF grants CCF-1314590,
CCF-1314633 and CCF-1533858, the Intel Science and Technology Center
for Cloud Computing, and the Miller Institute for Basic Research in
Science at UC Berkeley.
\bibliographystyle{plain}
\bibliography{ref}

\appendix
\hide{
\section{Motivation from~\cite{BFGGS15}}

Further motivation for the asymmetry between reads and write costs in
emerging memory technologies was provided in~\cite{BFGGS15}.  As a
convenience to the reviewer, in this appendix we repeat a suitable
excerpt from that paper.

``While DRAM stores data in capacitors that
typically require refreshing every few milliseconds,
and hence must be continuously powered, emerging NVM
technologies store data as ``states'' of the given material that
require no external power to retain.  Energy is required only to read
the cell or change its value (i.e., its state).  While there is no
significant cost difference between reading and writing DRAM (each
DRAM read of a location not currently buffered requires a write of
the DRAM row being evicted, and hence is also a write),
emerging NVMs such as Phase-Change Memory (PCM), Spin-Torque
Transfer Magnetic RAM (STT-RAM), and Memristor-based Resistive RAM
(ReRAM) each incur significantly higher cost for writing than reading.
This large gap seems fundamental to the technologies themselves: to
change the physical state of a material requires relatively
significant energy for a sufficient duration, whereas reading the
current state can be done quickly and, to ensure the state is left
unchanged, with low energy.  An STT-RAM cell, for example, can be read
in 0.14 $ns$ but uses a 10 $ns$ writing pulse duration, using roughly
$10^{-15}$ joules to read versus $10^{-12}$ joules to
write~\cite{Dong08} (these are the raw numbers at the materials
level).  A Memristor ReRAM cell uses a 100 $ns$ write pulse duration, and
an 8MB Memrister ReRAM chip is projected to have reads with 1.7 $ns$
latency and 0.2 $nJ$ energy versus writes with 200 $ns$ latency and 25 $nJ$
energy~\cite{Xu11}---over two orders of magnitude differences in latency
and energy.  PCM is the most mature of the three technologies, and
early generations are already available as I/O devices.  A recent
paper~\cite{Kim14} reported 6.7 $\mu s$ latency for a 4KB read and
128 $\mu s$ latency for a 4KB write.  Another reported that the
sector I/O latency and bandwidth for random 512B writes was a factor
of 15 worse than for reads~\cite{ibm-pcm14b}.  As a future memory/cache
replacement, a 512Mb PCM memory chip is projected to have 16 $ns$ byte
reads versus 416 $ns$ byte writes, and writes to a 16MB PCM L3 cache
are projected to be up to 40 times slower and use 17 times more energy
than reads~\cite{Dong09}.  While these numbers are speculative and subject
to change as the new technologies emerge over time, there seems to be
sufficient evidence that writes will be considerably more costly than
reads in these NVMs.''

Note that, unlike SSDs and earlier versions of phase-change memory products,
these emerging memory products will sit on the processor's memory bus and be
accessed at byte granularity via loads and stores (like DRAM).  Thus, the
time and energy for reading can be roughly on par with DRAM, and depends
primarily on the properties of the technology itself relative to DRAM.
}

\section{Longest Common Subsequence: Further Results}\label{sec:LCS-app}

\newenvironment{reflemma}[1]{\begin{trivlist}\item[\hskip
\labelsep{\bf Lemma \ref{lem:#1}}]\it}
{\end{trivlist}}

\newenvironment{reftheorem}[1]{\begin{trivlist}\item[\hskip
\labelsep{\bf Theorem \ref{thm:#1}}]\it}
{\end{trivlist}}

\begin{reftheorem}{LCSbound}
  Let $k_T=\min\{(\wcost{}/\memsize)^{1/3},\sqrt{\memsize}\}$ and
  suppose $m,n = \Omega(k_T\memsize)$.  Then it is possible to compute
  the length of the LCS or edit distance with total \work{} $\W(m,n) =
  O(mn+mn\wcost/(\memsize k_T))$.

  Let $k_Q=\min\{\wcost^{1/3},\sqrt{\memsize}\}$ and suppose
  $m,n=\Omega(k_Q\memsize)$.  Then it is possible to compute the
  length of the LCS or edit distance with an \iocost{} of
  $Q(m,n)=O(mn\wcost/(k_Q\memsize))$.
\end{reftheorem}
\begin{proof}
  As long as $h \leq \sqrt{\memsize}$, the number of writes reduces to
  $O(mn/(h\memsize))$.  (Increasing $h$ further causes the number of
  writes to increase.)

  Consider the \work{} bound first.  If $k_T \leq 1$, then just use
  algorithm with $h=k=1$.  Otherwise, let $\memsize' = \memsize/11$
  and use the algorithm with $h = k = k_T$. As long as $h=k \leq
  (\wcost/\memsize)^{1/3}$, which is true for $k_T$, the \work{} of
  operations is less than the \work{} of writes, giving the bound.

  For the \iocost{}, use our algorithm with $h=k=k_Q$.  As long as
  $h=k \leq \wcost^{1/3}$, then cost of reads is less than the cost of
  writes.
\end{proof}

\myparagraph{Improving the bound for smaller string lengths}
If $m \leq \memsize$, then the standard I/O algorithm becomes even
better --- simply sweep a column through, which remains in \fastmem,
using $O(m+n)$ reads, no writes, and $O(mn)$ \work.  Since there are no
writes, we cannot beat that bound.  As described already, if $m \geq
k\memsize'$ then our algorithm partitions the grid into $k\memsize'
\times k\memsize'$ squares, which for larger $k$ saves writes by
sacrificing reads and re-computation.  The remaining question is what
happens when $m$ is larger than \fastmem{} but not too much larger,
i.e., $\memsize < m < k_T\memsize'$ or $\memsize < m < k_Q\memsize'$.

When $m$ falls in this range, we apply the algorithm to $m \times
k\memsize'$ rectangles, i.e., setting $h = m/\memsize'$.  It turns out
we can achieve a better bound than Theorem~\ref{thm:LCSbound} by
increasing $k$ even further.  The key observation here is that the
bottom of the rectangle no longer needs to be written out because
there is no rectangle below it --- only the right edge is an output
edge.  The number of writes per rectangle (Lemma~\ref{lem:buildsketch}
with $X=h\memsize'$) reduces to $O(h^2k + h\memsize)$.  We thus have
the following modified version of Corollary~\ref{cor:LCScosts}

\begin{corollary}\label{cor:shortLCScosts}
  Let $m\leq n$ be the length of the two input strings, with $m\geq
  \memsize$.  Let $h=\Theta(m/\memsize)$, and suppose
  $k=O(n/\memsize)$ satisfying $h\leq k$. Then it is possible to
  compute LCS or edit distance of a length $m$ and $n$ input strings
  with $O(mnhk)$ operations in \fastmem, $O(mnhk/\memsize)$ reads to
  \slowmem, and $O(mnh/\memsize^2 + mn/(k\memsize))$ writes to
  \slowmem.
\end{corollary}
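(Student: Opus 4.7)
The plan is to specialize the construction behind Corollary~\ref{cor:LCScosts} to the regime where the short dimension $m$ fits into exactly one row of subrectangles. Concretely, I would choose $h = \Theta(m/\memsize')$ so that a single horizontal strip of $h\memsize' \times k\memsize'$ rectangles tiles the entire $m \times n$ grid, with $\Theta(n/(k\memsize))$ rectangles in that strip. The key observation enabling a better bound is that for each rectangle in this strip, there is no rectangle beneath it: the bottom boundary of each rectangle lies on the bottom boundary of the whole grid and is never used as input to a later rectangle, so its shortest-path distances need not be written out. Only the right boundary of each rectangle, with $h\memsize' = \Theta(m)$ nodes, serves as input to the next rectangle to its right. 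Thus when invoking Lemma~\ref{lem:buildsketch} we may use $X = O(h\memsize')$ rather than the generic $X = O(k\memsize)$ from Corollary~\ref{cor:LCScosts}.

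With this adjustment, Lemma~\ref{lem:buildsketch} gives, per rectangle, $O((hk)^2\memsize^2)$ fast-memory operations, $O((hk)^2\memsize)$ reads from \slowmem, and $O(h^2k + X) = O(h^2k + h\memsize)$ writes to \slowmem. I would then multiply these per-rectangle costs by the number of rectangles, $\Theta(n/(k\memsize))$, and simplify using $m = \Theta(h\memsize)$. The operation count becomes $O(h^2 k\memsize n) = O(mnhk)$, the read count becomes $O(h^2 k n) = O(mnhk/\memsize)$, and the write count becomes $O(h^2 n/\memsize + hn/k) = O(mnh/\memsize^2 + mn/(k\memsize))$, matching the stated bounds.

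The only step that needs a bit of care is verifying that the hypotheses of Lemma~\ref{lem:buildsketch} are still satisfied: we need $h \leq k$ (explicit in the statement) and the same working-space budget of $\memsize \geq 11\memsize' + \Theta(1)$ in \fastmem, which is unchanged since we did not alter the algorithm running inside each rectangle, only the observation about which boundaries must be written out. I would also note that the distances along the top boundary of the strip are handled exactly as input nodes are in Lemma~\ref{lem:buildsketch} (i.e., trivially, since the top of the grid has no incoming edges), so no new argument is required there.

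I do not expect a real obstacle here: the statement is essentially a recount of Corollary~\ref{cor:LCScosts} after substituting a smaller value of $X$ into Lemma~\ref{lem:buildsketch}. The only subtle point to get right is confirming that the bottom row of each rectangle truly need not be persisted, which is immediate because in this tiling no subsequent rectangle ever reads it.
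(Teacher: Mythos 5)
Your proposal is correct and takes essentially the same route as the paper: the paper likewise tiles the grid into a single strip of $\Theta(n/(k\memsize))$ rectangles of size $m \times k\memsize'$, invokes Lemma~\ref{lem:buildsketch} with $X=h\memsize'$ because the bottom boundary is never read by any later rectangle, and substitutes $h=\Theta(m/\memsize)$ to obtain the stated operation, read, and write bounds.
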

\begin{proof}
  There are $\Theta(n/(k\memsize))$ size $m \times (k\memsize/11)$
  subgrids.  Multiplying by the cost of each grid from
  Lemma~\ref{lem:buildsketch}, with $X=h\memsize$, gives
  $O(nh^2k\memsize)$ operations, $O(nh^2k)$ reads, and
  $O(nh^2/\memsize + nh/k)$ writes.  Substituting one of the $h$ terms
  with $h=\Theta(m/\memsize)$ gives the theorem.
\end{proof}

The following theorem provides the improved \work{} and \iocost{} in the
case that one string is short but the other is long.  To understand
the bounds here, consider the maximum and minimum values of $h$ for
$h=m/\memsize'$ and large $\wcost$.  If $h=(\wcost/\memsize)^{1/3}$,
i.e., $m$ is large enough that we can divide into $k_T\memsize' \times
k_T\memsize'$ squares, then we get $k_T'= (\wcost/\memsize)^{1/3}$
matching the bound in Theorem~\ref{thm:LCSbound}.  As $h$ decreases,
the bound improves. In the limit, $h=\Theta(1)$ (or
$m=\Theta(\memsize)$), we get $k_T' = \sqrt{\wcost/\memsize}$ which is
better.

\begin{theorem}
  Let $h=\Theta(m/\memsize)$ and suppose that $h \leq k_T$ specified
  in Theorem~\ref{thm:LCSbound}.  Let $k_T' =
  \min\{\sqrt{\wcost/(h\memsize)},\memsize/h\}$ and suppose that
  $n=\Omega(k_T'\memsize)$.  Then it is possible to compute the length
  of the LCS or edit distance with total \work{} of $\W(m,n) =
  O(mn+mn\wcost/(k_T'\memsize))$.

  Let $h=\Theta(m/\memsize)$ and suppose that $h \leq k_Q$ specified
  in Theorem~\ref{thm:LCSbound}.  Let $k_Q' =
  \min\{\sqrt{\wcost/h},\memsize/h\}$ and suppose
  $n=\Omega(k_Q'\memsize)$.  Then it is possible to compute the length
  of the LCS or edit distance with an \iocost{} of
  $Q(m,n)=O(mn\wcost/(k_Q'\memsize))$.
\end{theorem}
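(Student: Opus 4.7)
The plan is to instantiate Corollary~\ref{cor:shortLCScosts} at the given fixed $h = \Theta(m/\memsize)$ and then optimize the remaining parameter $k$. Rewriting the bounds of that corollary in terms of $k$ alone, the \work{} becomes $mn\bigl(hk + \wcost h/\memsize^2 + \wcost/(k\memsize)\bigr)$ and the \iocost{} becomes $mn\bigl(hk/\memsize + \wcost h/\memsize^2 + \wcost/(k\memsize)\bigr)$. In each expression only the first and last summands depend on $k$, so the natural move is to balance them. For \work, setting $hk = \wcost/(k\memsize)$ yields the unconstrained optimum $k^\star = \sqrt{\wcost/(h\memsize)}$, with common value $\sqrt{h\wcost/\memsize} = \wcost/(k^\star \memsize)$; for \iocost, the analogous balance gives $k^\star = \sqrt{\wcost/h}$. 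These are precisely the first arguments of the minima defining $k_T'$ and $k_Q'$.

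The second argument $\memsize/h$ in each minimum covers the regime where this unconstrained optimum exceeds $\memsize/h$. In that regime I would simply cap $k = \memsize/h$: at this value the term $\wcost/(k\memsize)$ equals the $k$-independent middle term $\wcost h/\memsize^2$, so enlarging $k$ further would only hurt the first summand without asymptotically improving anything. The resulting target $\wcost/(k_T'\memsize) = \wcost h/\memsize^2$ (and similarly for $k_Q'$) therefore still upper-bounds every $k$-independent contribution.

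The main bookkeeping obstacle is verifying that the $k$-independent middle term $mn\wcost h/\memsize^2$ never exceeds the stated target. In the capped case $k_T' = \memsize/h$ this holds with equality. In the balanced case $k_T' = \sqrt{\wcost/(h\memsize)}$, the required inequality $\wcost h/\memsize^2 \leq \sqrt{h\wcost/\memsize}$ is, after squaring, exactly the case-one hypothesis $\wcost h \leq \memsize^3$, which itself is equivalent to $\sqrt{\wcost/(h\memsize)} \leq \memsize/h$. The \iocost{} case is analogous, with the $h/\memsize$ coefficient replacing $h$ throughout the balancing step.

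The final step is to check that Corollary~\ref{cor:shortLCScosts} actually applies with the chosen $k = k_T'$ (respectively $k_Q'$). The corollary requires $h \leq k$, and the hypothesis $h \leq k_T = \min\{(\wcost/\memsize)^{1/3}, \sqrt{\memsize}\}$ from Theorem~\ref{thm:LCSbound} yields both $h \leq \sqrt{\wcost/(h\memsize)}$ (from $h^3 \leq \wcost/\memsize$) and $h \leq \memsize/h$ (from $h^2 \leq \memsize$), so $h \leq k_T'$ in whichever branch of the min is active; the same argument works for $k_Q'$. The assumption $n = \Omega(k_T'\memsize)$ (respectively $\Omega(k_Q'\memsize)$) ensures that at least one rectangle of the chosen shape fits in the grid. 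Substituting the chosen $k_T'$ (respectively $k_Q'$) back into the \work{} (respectively \iocost{}) formula, combining all three summands into the single target, and adding the unavoidable $\Theta(mn)$ contribution from touching every cell, yields the two stated bounds.
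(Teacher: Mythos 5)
Your proposal is correct and follows essentially the same route as the paper: apply Corollary~\ref{cor:shortLCScosts} with $h=\Theta(m/\memsize)$ fixed, pick $k$ as the stated minimum so that the write cost dominates both the in-\fastmem{} operations (resp.\ reads) and the $k$-independent $mnh/\memsize^2$ write term, and use $h\le k_T$ (resp.\ $h\le k_Q$) to verify the corollary's hypothesis $h\le k$. The paper's own proof is just a terser version of this; your explicit checks (the equivalence of $\sqrt{\wcost/(h\memsize)}\le\memsize/h$ with $\wcost h\le\memsize^3$, and the derivation of $h\le k_T'$ from $h\le k_T$) are exactly the details it leaves implicit.
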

\begin{proof}
  With the restrictions on $h$, we have $h\leq k$, so
  Corollary~\ref{cor:shortLCScosts} is applicable.  As in proof of
  Theorem~\ref{thm:LCSbound}, the second term of the min has the
  effect that $O(mnh/\memsize^2 + mn/(k\memsize)) =
  O(mn/(k\memsize))$.  The rest of the bound follows by choice of $k$
  to makes the cost of writes dominate.
\end{proof}

When $n$ is also small, the bound improves further.  In this case, the
algorithm consists of building the sketch on a single $m \times n$
grid, so no boundary nodes are output --- the only writes that need be
performed are the sketch itself.

\begin{theorem}
  Let $m\leq n$ be the length of the two input strings, with $m\geq
  \memsize$.  Let $h=\Theta(m/\memsize)$ and let
  $k=\Theta(n/\memsize)$.  Then it is possible to compute the LCS or
  edit distance of the two strings with \work{} $\W(m,n) = O(mnhk +
  h^2k\wcost)$ and \iocost{} $Q(m,n) = O(mnhk/\memsize + h^2k\wcost)$.
\end{theorem}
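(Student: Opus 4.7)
The plan is to apply the sketch-building algorithm from Lemma~\ref{lem:buildsketch} directly to the entire input as a single rectangle, and observe that the boundary-node output cost collapses to $O(1)$.

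First, I would set $\memsize' = \memsize/c$ as in the main algorithm, and note that the hypothesis $h = \Theta(m/\memsize)$ and $k = \Theta(n/\memsize)$ (together with $m \leq n$, so $h \leq k$) means that a single $h\memsize' \times k\memsize'$ rectangle covers the whole $m\times n$ grid. The ``input nodes'' of this rectangle are determined trivially by the two input strings (the top and left boundary distances for LCS/ED are fixed constants), so there is no prior computation to feed it. The side conditions of Lemma~\ref{lem:buildsketch} ($h \leq k$ and $\memsize \geq 11\memsize' + \Theta(1)$) hold by our choice of $c$.

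Next, I would invoke Lemma~\ref{lem:buildsketch} on this lone rectangle. Because there is no rectangle below or to the right, we do \emph{not} need to write out the shortest-path distances along the bottom or right boundary; the algorithm only has to emit the single distance value at the bottom-right corner (the LCS length or edit distance). Thus the parameter $X$ from Lemma~\ref{lem:buildsketch}, which accounts for boundary writes, satisfies $X = O(1)$ rather than the generic $O(k\memsize)$. Plugging this into the lemma yields $O((hk)^2\memsize^2)$ operations in \fastmem{}, $O((hk)^2\memsize)$ reads from \slowmem{}, and $O(h^2 k + 1) = O(h^2 k)$ writes to \slowmem{}, the last being purely the path-sketch bookkeeping (the $O(hk)$ extensions and $O(hk)$ copies across each of $h$ slabs).

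Finally, I would convert these counts into the claimed cost bounds. Substituting $h = \Theta(m/\memsize)$ and $k = \Theta(n/\memsize)$ gives $(hk)^2\memsize^2 = (hk\memsize)^2 = \Theta(mnhk)$ and $(hk)^2\memsize = \Theta(mnhk/\memsize)$. Therefore
\[
Q(m,n) \;=\; O\!\left(\tfrac{mnhk}{\memsize} + \wcost\, h^2 k\right),
\qquad
\W(m,n) \;=\; Q(m,n) + O(mnhk) \;=\; O\!\left(mnhk + \wcost\, h^2 k\right),
\]
as required. The only non-mechanical step is the observation that for a single rectangle the boundary-write term vanishes; everything else is a direct application of Lemma~\ref{lem:buildsketch} followed by algebraic substitution, so I do not anticipate any real obstacle.
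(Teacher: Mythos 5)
Your proposal is correct and matches the paper's own (one-sentence) proof: the paper likewise invokes Lemma~\ref{lem:buildsketch} on the single $h\memsize'\times k\memsize'$ rectangle covering the whole grid, takes $X=0$ since no boundary nodes need to be written out, and substitutes $hk\memsize^2=\Theta(mn)$ into the operation and read counts. Your write-up simply fills in the routine details that the paper leaves implicit.
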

\begin{proof}
  The bound follows directly from Lemma~\ref{lem:buildsketch} with
  $X=0$ and substituting one $hk$ term in the read/\work{} bounds.
\end{proof}

\myparagraph{Recovering the shortest path} The standard approach for
outputting the shortest path is to trace backwards through the grid
from the bottom-rightmost node.  This approach assumes that the
distances to all internal nodes are known, but unfortunately our
algorithm discards distances to interior nodes.

Fortunately, the sketch provides enough information to cheaply
traceback a path through each square without any additional writes
(except the path itself) and without asymptotically more reads or
\work{}.  In particular, for any node $v$ in superrow $i+1$, it is not
hard to identify a node $u$ in superrow $i$ such that a shortest path
to $v$ passes through $u$.  Consider the sketch
$\langle s,\ell_s,r_s \rangle, \ldots, \langle i,\ell_i,r_i\rangle,
\langle i+1, \ell_{i+1}, r_{i+1}\rangle$
to the segment $\langle i+1, \ell_{i+1}, r_{i+1}\rangle$ that includes
$v$.  The vertex $u$ is one of the vertices in the penultimate segment
$\langle i, \ell_i, r_i \rangle$ of the sketch, so the goal is to
identify which one.  To do so, evaluate the sketch to the segment
$\langle i, \ell_i, r_i\rangle$.  Then perform a horizontal sweep
through the final slab, keeping track of the originating vertex from
the penultimate segment.

Now suppose we have these vertices $u$ and $v$ that fall along the
shortest path and are in consecutive superrows.  We also need to
identify the path through the slab between $u$ and $v$.  To do that,
we apply Hirschberg's~\cite{Hirschberg75} recursive low-space
algorithm for path recovery in the ED/LCS grid, splitting the
horizontal dimension in half on each recursion.  Note that the \work{}
reduces by a constant fraction in each recursion, but the \iocost{}
does not (the only \iocost{} here is from reading the ``horizontal''
input string), so it may not be immediately obvious that the \iocost{}
is cheap enough.  Fortunately, after recursing at most $\lg k$ times,
the length of the horizontal substring is at most $M'$ and the
remaining path-recovery subproblem can be done with no further reads
from \slowmem{}.

Putting it all together, tracing a path to the previous superrow
requires one sketch evaluation, followed by \work{} that is linear in
the area and an \iocost{} that corresponds to reading the
horizontal string from \slowmem{} $\lg k$ times.
Rounding up loosely, we get \work{} of $O(k(\memsize')^2 +
(\memsize)(k\memsize)) = O(k\memsize^2)$ along with $O(k\memsize' +
(k\memsize')\lg k) = O(k^2\memsize)$ reads.  Multiplying by the $h$
superrows, we have $O(hk\memsize^2)$ \work{} and $O(hk^2\memsize)$
reads from \slowmem{}.  Both of these are less than the cost of
building the sketch in the first place (Lemma~\ref{lem:buildsketch}).

\section{Write-Efficient Floyd--Warshall Algorithm}
\label{sec:fw}

The Floyd--Warshall algorithm solves the all-pairs shortest path
problem on weighted graphs~\cite{Floyd:1962}.  It requires $O(n^3)$
writes in its original and currently recognized form and hence
$O(\wcost n^3)$ \work{} in \ourmodel{}.  Here we describe how to
reorganize the computation so that it only requires
$\W=O(n^3+\wcost{}n^2)$ \work. Compared to the version described in
Section~\ref{sec:upper-intro}, this algorithm is easier to program,
and has a smaller constant coefficient.

Consider a graph $G$ with vertices $V = \{1,\ldots,n\}$ and a
function $\SP(i, j, k)$ that returns the shortest possible path from
$i$ to $j$ using vertices only from the set $\{1,2,\ldots,k\}$ as
intermediate points along the path.  $\SP(i, j, k)$ can be computed
using the Floyd--Warshall algorithm with the following update rule:
$$\SP(i,j,k)=\min\{\SP(i,j,k-1),\SP(i,k,k-1)+\SP(k,j,k-1)\}$$ where
$\SP(i,j,0)$ is initialized as the weight of the edge from vertex $i$
to vertex $j$ ($+\infty$ if the edge does not exist).  The shortest
path from vertex $i$ to vertex $j$ is stored in $\SP(i,j,n)$, after
the computation is finished.

With this formulation of the DP, $O(n^3)$ writes will be required.  We now introduce an
alternative and equivalent formulation of the DP that will reduce the number of writes
to $O(n^2)$.  Let $A(i,k)$ be $\SP(i,k,k-1)$ and $B(k,j)$ be
$\SP(k,j,k-1)$ (with the same initialization as before for $A(i,1)$ and $B(1,i)$).  Then they can be
computed as:
$$A(i,k) = \min_{k'<k} \{A(i,k')+B(k',k)\}$$
$$B(k,j) = \min_{k'<k} \{A(k,k')+B(k',j)\}$$ in increasing order of
$k$ from $1$ to $n$.  After $A(i,k)$ and $B(k,j)$ are calculated, the
shortest distance $D(i,j)$ from vertex $i$ to vertex $j$ (equivalent to
$\SP(i,j,n)$) can be computed as:
$$D(i,j) = \min_{1\le k \le n} \{A(i,k)+B(k,j)\}$$

Clearly the computation of the new DP requires only $O(n^2)$ writes.
The correctness can be easily shown by verifying that $D(i,j)$ is equivalent
to $\SP(i,j,n)$.


\hide{
Given a
graph with 3 vertices, Figure~\ref{fig:FW} illustrates the second
in computing this function ($k=2$), and the dependency of the
computational DAG.  The pseudocode of the Floyd--Warshall algorithm is
provided in Algorithm~\ref{algo:FW}.

\begin{algorithm}[th]
\caption{The Floyd--Warshall algorithm}
\label{algo:FW}
\KwIn{Input: A weighted graph $G=(V,E,w)$}
\KwOut{Output: The all-pairs shortest-paths matrix $d_{|V|\times|V|}$}
    {Initialize the distance $d_{i,j}$ of all edges $(i,j)\in E$ to $\mb{dist}_{i,j}$, and to $\infty$ for $(i,j)\not\in E$.}\\
    \For {$k\leftarrow 1$ to $|V|$} {
    \For {$i\leftarrow 1$ to $|V|$} {
    \For {$j\leftarrow 1$ to $|V|$} {
        {$d_{i,j}\leftarrow \min\{d_{i,j},d_{i,k}+d_{k,j}\}$}
    }}}
    \Return {$d(\cdot,\cdot)$}
\end{algorithm}

\begin{figure}[t]
\centering
\includegraphics[width=.4\linewidth]{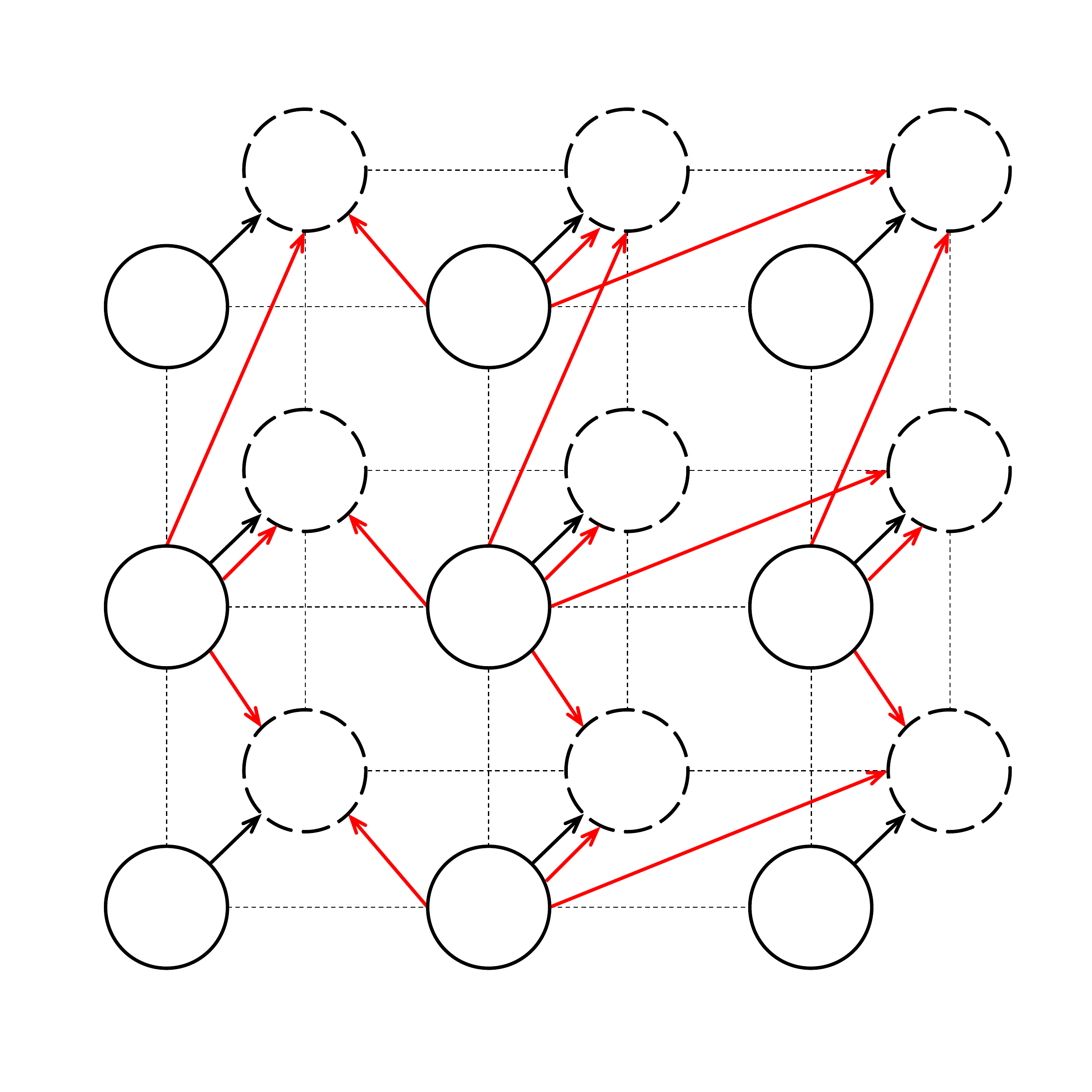}
\vspace{-2em}
\caption{The dependence of the Floyd--Warshall algorithm on a graph with 3 vertices in the second round. Solid nodes correspond to $\SP(\cdot,\cdot,1)$ and dash nodes for $\SP(\cdot,\cdot,2)$. The black and red arrows indicate the data dependencies of the first and second term in the function respectively, and red arrows form an outer product of the second column and the second row, which is used to update the dash nodes.}\label{fig:FW}
\end{figure}

Now we explain how to reduce the number of writes from $O(n^3)$ to $O(n^2)$.
Notice that the third dimension in the function does not exist explicitly in the pseudocode. This is because the path between $i$ and $j$ in the $k$-th round are still available in the $(k+1)$-th round (black edges in Figure~\ref{fig:FW}). Hence, the left computations in each round, shown in red edges in Figure~\ref{fig:FW}, can be viewed as an outer product of the $k$-th column and $k$-th row that is computed and used to update the whole matrix. In Figure~\ref{fig:FW}, $k=2$ since we are working on the second round.

The main observation used to reduce writes is that the value of
$d_{i,j}$ is only used to compute other distances in the $i$-th and
$j$-th rounds. Therefore, we reorder the computations as to postpone
the updates as much as possible. To be more precise, in the $k$-th
iteration, we only compute and update the elements in the $k$-th
column and $k$-th row in the matrix. We do a batch of $O(n)$ postponed
updates for each of these elements, and in total, each round requires
$O(n^2)$ reads and instructions, but only $2n$ writes.  Then after the
last round, we update all distances to their final values using $n^2$
writes. The number of instructions of the new ordering is the same as
other versions, $O(n^3)$, but the number of writes decreases to
$O(n^2)$.

We now argue the correctness of this algorithm. The batched updates for each element $\SP(i,j,j)$ are $\{\SP(i,k,k-1)+\SP(k,j,k-1)\}$ where $s\leq k<j$ and $s$ is $1$ if $i>j$ and $i$ otherwise. When replacing them using $d(\cdot,\cdot)$, the values may be changed afterward, which means that we do change the computational DAG of the algorithm. Nevertheless, since $d_{i,j}$ is non-increasing during the whole process, using a later value will not affect the correctness of the algorithm.
}

\end{document}